\newtheorem{proposition}{Proposition}
\newtheorem{lemma}{Lemma}
\newtheorem{corollary}{Corollary}
\newtheorem*{question*}{Question}
\newcommand{\R}{\mathbb{R}} 
\newcommand{\C}{\mathbb{C}} 
\newcommand{\real}{\mathbb R} 
\newcommand{\complex}{\mathbb C}
\newcommand{\e}{{\rm e}} 
\newcommand{\no}[1]{\left\|#1\right\|} 
\newcommand{\tr}[1]{{\rm tr}\left[#1\right]} 
\newcommand{\ran}{\textrm{ran}} 
\newcommand{\id}{\mathbbm{1}} 
\newcommand{\flip}{\mathbbm{F}} 
\renewcommand{\rho}{\varrho}
\newcommand{\lam}{\lambda}
\renewcommand{\Re}{{\rm Re}\,}
\renewcommand{\Im}{{\rm Im}\,}
\newcommand{\dett}{{\rm det}} 
\begin{document}\setlength{\arraycolsep}{2pt}

\title[]{Verifying the quantumness of bipartite correlations}

\author{Claudio Carmeli}
\email{claudio.carmeli@gmail.com}
\affiliation{DIME, Universit\`a di Genova, Via Magliotto 2, I-17100 Savona, Italy}

\author{Teiko Heinosaari}
\email{teiko.heinosaari@utu.fi}
\affiliation{Turku Centre for Quantum Physics, Department of Physics and Astronomy, University of Turku, FI-20014 Turku, Finland}

\author{Antti Karlsson}
\email{aspkar@utu.fi}
\affiliation{Turku Centre for Quantum Physics, Department of Physics and Astronomy, University of Turku, FI-20014 Turku, Finland}

\author{Jussi Schultz}
\email{jussi.schultz@gmail.com}
\affiliation{Turku Centre for Quantum Physics, Department of Physics and Astronomy, University of Turku, FI-20014 Turku, Finland}

\author{Alessandro Toigo}
\email{alessandro.toigo@polimi.it}
\affiliation{Dipartimento di Matematica, Politecnico di Milano, Piazza Leonardo da Vinci 32, I-20133 Milano, Italy}
\affiliation{I.N.F.N., Sezione di Milano, Via Celoria 16, I-20133 Milano, Italy}

\begin{abstract}
Entanglement is at the heart of most quantum information tasks, and therefore considerable effort has been made to find methods of  deciding the entanglement content of a given bipartite quantum state. Here, we prove a fundamental limitation to deciding if an unknown state is entangled or not: we show that any quantum measurement which can answer this question necessarily gives enough information to identify the state completely. Therefore, only prior information regarding the state can make entanglement detection less expensive than full state tomography in terms of the demanded quantum resources. We also extend our treatment to other classes of correlated states by considering the problem of deciding if a state is NPT, discordant, or fully classically correlated. Remarkably, only the question related to quantum discord can be answered without resorting to full state tomography.

\end{abstract}

\maketitle


The advent of quantum information theory has brought entanglement from  a peculiarity of quantum theory into a genuine resource which allows the performance of tasks that are beyond what is possible within classical physics. The vast sea of applications of entanglement include super dense coding \cite{BeWi92}, teleportation \cite{BeBrCrJoPeWo93} and one-way quantum computation \cite{RaBr01}, to name a few. Due to the significant role of entanglement in these applications, a considerable effort has been made to find methods of verifying if a given bipartite system is entangled or not \cite{HoHoHoHo09, GuTo09}.  Many of the existing methods such as entanglement witnesses \cite{HoHoHo96, ChSa14} or Bell inequalitites \cite{Bell64, ClHoShHo69} are able to detect the entanglement of specific types of states, and it seems that a detour via full state tomography is the only way to give a definite answer for an arbitrary unknown state. However, even with complete knowledge of the quantum state, the problem is computationally extremely difficult \cite{Gu03,Io07}, and since deciding if a state is entangled or not is a simple yes-no question, one might hope to do this with less resources. In this article, we address the problem of entanglement detection from a very fundamental point of view by formulating and answering the following question: 

\begin{quote}
\emph{What is the minimum amount of information required from a quantum measurement to allow us to infer whether or not an unknown state is entangled?}
\end{quote} 

In order to properly motivate this question, we begin with an illustrative example of the corresponding problem for \emph{pure} states.
Suppose we have a bipartite system consisting of two $d$-dimensional subsystems. 
The system is in an unknown pure state described by a unit vector $\vert\psi\rangle$ in the $D=d^2$-dimensional Hilbert space of the composite system, and it is entangled if it cannot be decomposed as a factorized vector $\vert\psi\rangle = \vert\phi\rangle\otimes \vert\varphi\rangle$. One possible way of deciding if this is the case, is to first identify completely which pure state the system is in and then look at its Schmidt decomposition. By exploiting prior knowledge about the purity of the state this can be done with a measurement consisting of $\sim 4D=4d^2$ different outcomes \cite{HeMaWo13}. However, since  a pure state is entangled if and only if its reduced states are mixed, it is sufficient to perform a local measurement  which answers this question on either subsystem. This can be achieved with just $\sim 5d$ outcomes \cite{ChDaJietal13}, which shows a significant reduction in the resources needed for answering the question. 

The essential conclusion in the above example is that it is possible to find out if a pure state is entangled \emph{without} identifying the state. In fact, a direct comparison of the minimal number of outcomes $4d^2$ and $5d$ shows that the latter measurement is not sufficient, even in principle, for identifying an unknown pure state. For mixed states the situation changes drastically. A general mixed state of a bipartite system is described by a $D\times D$ density matrix $\varrho$, and it is  entangled if it does not admit a convex mixture $\varrho = \sum_{i} c_i \sigma_i\otimes \eta_i$
in terms of factorized states. Clearly, no strategy involving only local measurements on one party can be useful, but our main result says even more, namely, that:

\begin{quote}
\emph{Any measurement which can be used to decide if a bipartite system is in an entangled state is necessarily giving enough information to identify the unknown state completely}.
\end{quote}

As we will explain, this fact can be formulated as a geometric property of the subset of entangled states.
The geometric approach can also be applied to other subsets of a bipartite state space, and we study the question for NPT states, discordant states, and fully classically correlated states. 
Interestingly, only the membership to the subset of discordant states can be determined without  full knowledge of the state. For two qubits, this result is implicit in \cite{DaVeBr10}.

\section*{Verifying  properties of quantum states}\label{sec:problem}
Entanglement detection can be thought of as a membership problem: it is a question about whether or not an unknown state belongs to the set of entangled states. The same question can obviously be asked for any subset $\mathcal{P}$ of the state space $\mathcal{S}$ of the system. Physically one can think of this as  a problem of deciding if the system possesses some predefined property represented by $\mathcal{P}$. In order to answer this, one must  perform a measurement on the system, represented by a \emph{positive operator valued measure} (POVM) \cite{MLQT12}. A POVM with a finite number of outcomes is a map $E$ that assigns a positive operator $E_j$ to each measurement outcome $j$ and satisfies the normalization $\sum_j E_j = \id$.
The measurement of  $E$ in a state $\varrho$ then yields measurement outcomes which are distributed according to the probabilities $p_j=\tr{\varrho E_j}$.

If the measurement of $E$ is meant to give an answer to the membership problem, then the states in $\mathcal{P}$ must be distinguishable from the states in the complement $\mathcal{P}^c = \mathcal{S}\setminus\mathcal{P}$ solely in terms of the measurement outcome statistics. In order to get a grasp of what this means, notice that  if two states $\varrho_1$ and $\varrho_2$ give the same statistics, then  $\tr{(\varrho_1-\varrho_2)E_j}=0$ for all $j$. In other words,  the operator $\varrho_1-\varrho_2$ belongs to the subspace 
\begin{equation*}
\mathcal{X}_E=\{ \Delta \mid \tr{\Delta E_j}=0\,\text{ for all }\, j \}
\end{equation*}
of traceless Hermitian operators which are orthogonal to all the POVM elements $E_j$. The geometric idea behind this is depicted in Fig.~\ref{fig:geometry}. By turning this around, we see that a measurement of $E$ determines if a state belongs to $\mathcal{P}$, if and only if the subspace $\mathcal{X}_E$ has the property that no nonzero operator $\Delta\in\mathcal{X}_E$  can be decomposed as  $\Delta=\lambda(\varrho_1-\varrho_2)$ with $\varrho_1\in\mathcal{P}$ and $\varrho_2\in\mathcal{P}^c$ and some  scalar $\lambda$.

\begin{figure}
\centering
            \includegraphics[width=8cm]{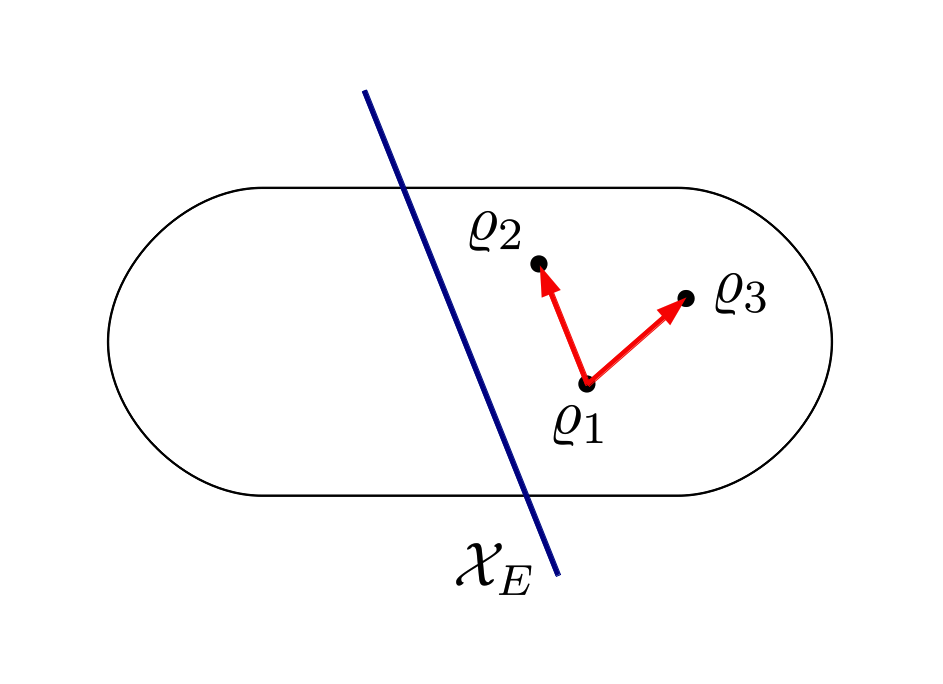} 
           \caption{Geometric idea behind state distinction. The state space and the space $\mathcal{X}_E$ orthogonal to all POVM elements $E_j$ are embedded in a common space. The states $\varrho_1$ and $\varrho_2$ cannot be distinguished by measuring $E$, since the line segment from $\varrho_1$ to $\varrho_2$ is parallel to $\mathcal{X}_E$. For the states $\varrho_1$ and $\varrho_3$ this is not the case, and therefore a measurement of $E$ can be used to distinguish between them.}
    \label{fig:geometry}
\end{figure}

In the extreme case that $\mathcal{X}_E = \{ 0\}$, the POVM $E$ is called \emph{informationally complete} \cite{Prugovecki77}, and it means that any two states are distinguishable from the statistics. Such measurements form the basis of quantum state tomography, and always provide a trivial answer to the membership problem. However, an informationally complete POVM always requires $D^2$ different measurement outcomes, and thus it is the most demanding in terms of quantum resources. Our aim instead is to find optimal measurements which minimize the number of  needed outcomes. To this end, notice that the dimension of the subspace $\mathcal{X}_E$ depends only on the number of linearly independent POVM elements and as such, on the minimal number of outcomes for the optimal measurement. In fact, if we denote the number of linearly independent elements by $\dim E$,  then we have the equality
\begin{equation}\label{eqn:dim} 
D^2 = \dim E + \dim \mathcal{X}_E 
\end{equation}
as the real linear space of all Hermitian operators splits into a direct sum of $\mathcal{X}_E$ and the span of $E$. This means that in our search for an optimal measurement for verifying a given property $\mathcal{P}$, we can  look for the largest subspace $\mathcal{X}_E$ which still has the property stated in the previous paragraph.

\section*{Geometrical viewpoint}\label{sec:geometry}

When the problem of verifying a property of a quantum state is viewed as a membership problem, it allows an intuitive geometrical interpretation. Indeed, using the generalized Bloch representation $\varrho = \frac{1}{D}(\id + \vec{r}\cdot \vec{\sigma})$, we can view the state space of a $D$-dimensional quantum system as  a subset of $\real^{D^2-1}$ \cite{GQS06}. 
Moreover, since the matrices $\{ \sigma_j \}_{j=1}^{D^2-1}$ form a basis of traceless Hermitian matrices, we can embed also $\mathcal{X}_E$ into the same space.

For a given property $\mathcal{P}$ (now viewed also as a subset of $\real^{D^2-1}$) and a POVM $E$, the relevant question was if any $\Delta\in\mathcal{X}_E$ could be decomposed as $\Delta = \lambda(\rho_1-\rho_2)$ with $\rho_1\in\mathcal{P}$ and $\rho_2 \in\mathcal{P}^c$. Solving for $\rho_1 = \rho_2 + \lambda^{-1}\Delta$, we arrive at the following observation: a POVM $E$ can not be used to verify a property $\mathcal{P}$ if it is possible to draw a line segment from $\mathcal{P}$ to $\mathcal{P}^c$ which is parallel to $\mathcal{X}_E$. Conversely, $E$ can be used for the task if there does not exist a line parallel to $\mathcal{X}_E$ which intersects both $\mathcal{P}$ and $\mathcal{P}^c$.

Since the state space $\mathcal{S}$ is a compact convex subset of $\real^{D^2-1}$,  we immediately notice that the problem becomes trivial if either $\mathcal{P}$ or $\mathcal{P}^c$ is contained in the interior of $\mathcal{S}$. In that case for any direction one can always find a parallel line which intersects both sets. The physical implication of this is that in order to verify the corresponding property of the state, it is necessary to measure an informationallly complete POVM, see Fig.~\ref{fig:examples}. However, for the properties  we are interested in, such as entanglement, both sets contain states from the boundary of the state space (e.g., pure states can be entangled or separable), and the problem  becomes more intricate.

\begin{figure}
\centering
            \includegraphics[width=9cm]{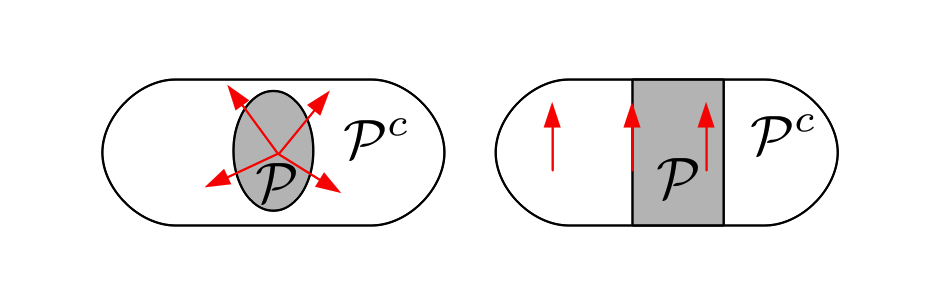} 
           \caption{Two examples of properties to be verified. In the left picture, the set $\mathcal{P}$ is contained in the interior of the state space, and therefore for any direction it is possible to draw a line segment from $\mathcal{P}$ to $\mathcal{P}^c$ in that direction. This means that an informationally complete measurement is needed to decide whether a state is in $\mathcal{P}$ or $\mathcal{P}^c$. In the right picture, the vertical line segments do not cross the boundary between  $\mathcal{P}$ and $\mathcal{P}^c$, and therefore distinguishing between these sets does not require informational completeness.}
    \label{fig:examples}
\end{figure}

Whenever the property is such that it can be verified without informational completeness, we also have interesting geometrical implications regarding the corresponding sets in $\real^{D^2-1}$. Indeed, this means that there exists at least one preferred direction  such that no line parallel to it crosses the boundary between $\mathcal{P}$ and  $\mathcal{P}^c$. 
Hence, the boundary is flat in that direction; see Fig.~\ref{fig:examples}. 
Similarly, if there are several preferred directions which corresponds to one being able to verify the property $\mathcal{P}$ with fewer POVM elements (measurement outcomes), then the boundary is flat in all of these directions.

\section*{Bipartite correlations}\label{sec:correlations}
We now make these general ideas more concrete by specifying the types of properties that we wish to verify. In our case, these are given by different levels of correlations between two quantum systems, see Fig.~\ref{fig:state_space}. 

\begin{figure}
\centering
            \includegraphics[width=9cm]{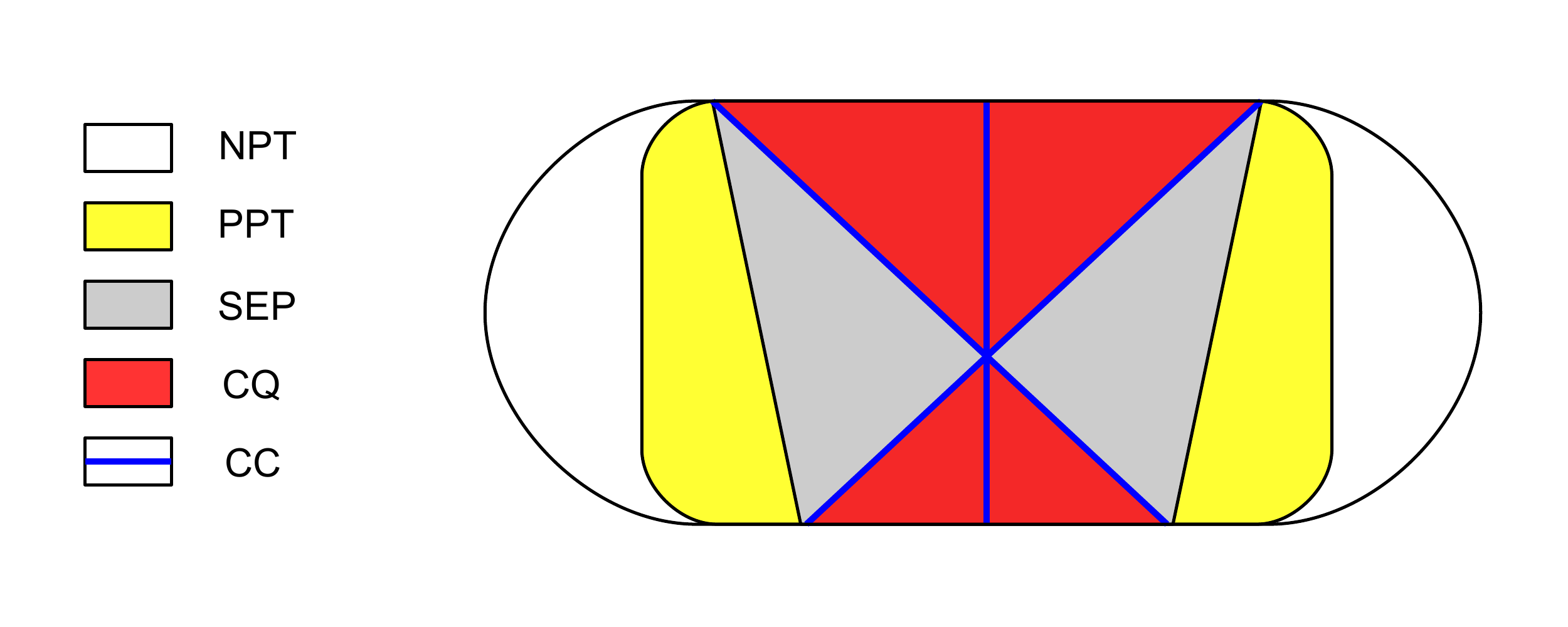} 
           \caption{Schematic of the state space of a bipartite system with the various subsets of states possessing different types of correlations. For the colored areas we have the inclusions ${\rm CC} \subset {\rm CQ} \subset {\rm SEP} \subset {\rm PPT}$, whereas the the NPT states form the complement of the colored region.}
    \label{fig:state_space}
\end{figure}

The most commonly encountered class of bipartite states possessing quantum correlations are the \emph{entangled} states. These are the states which cannot be represented as convex mixtures 
$$
\varrho = \sum_i c_i \sigma_i\otimes \eta_i
$$ 
of product states \cite{Werner89}. The complement of the set of entangled states is the set of separable (SEP) states. The significance of entanglement in quantum information processing is undisputed, and hence the detection of entanglement is of utmost importance. However, even if the state of the bipartite system is known, it is generally an NP-hard problem to decide if it is separable or not \cite{Gu03}. For this reason one typically relies on some more easily manageable sufficient test for entanglement. One such test is based on the NPT (negative partial transpose) criterion: if the partial transpose of $\varrho$ has a negative eigenvalue, then the state is entangled. The converse implication holds only for systems consisting of two qubits or a qubit and a qutrit \cite{HoHoHo96}. The NPT states are also important in their own right, in particular due to their connection with entanglement distillation \cite{HoHoHoHo09}. The division of the state space into NPT and PPT (positive partial transpose) states gives example of a property whose verification we consider.

Even a separable state may have some correlations that are not purely classical in nature. Quantum discord, a certain functional on the state space, has been introduced as a quantifier of the quantumness of these correlations \cite{OlZu01}. 
It has been shown that nonzero discord yields advantage in tasks such as distributing entanglement \cite{CuVeDuCi03}, phase estimation \cite{Gi-etal14} and remote state preparation \cite{Dakic-etal12}, and it allows locking of correlations without entanglement \cite{DiHoLeSmTe04} and  special encoding of information \cite{Gu-etal12}. 
A state has nonzero quantum discord if it cannot be written as 
$$
\varrho = \sum_i c_i \vert \phi_i \rangle \langle \phi_i \vert \otimes \eta_i
$$
for some orthonormal basis $\{\vert \phi_i \rangle\}_{i=1}^d$. If such a decomposition exists, we say that the state is classical-quantum (CQ) in order to emphasize the asymmetric nature of the zero discord states. Similarly, by  exchanging  the local parties we get quantum-classical (QC) states. The states lying in the intersection of these sets are then called classical-classical (CC) and these are the states which can be written as
\begin{equation*}
\varrho = \sum_{ij} c_{ij} \vert \phi_i \rangle\langle \phi_i \vert \otimes \vert \varphi_j \rangle \langle \varphi_j \vert
\end{equation*}
for some orthonormal bases $\{\vert \phi_i \rangle\}_{i=1}^d$ and $\{\vert \varphi_j \rangle\}_{j=1}^d$. The CC states are often called (fully) classically correlated states, and we will use the term non-classical to mean the complement of these states.

\section*{Main results}\label{sec:results}

Taking the general formulation of the verification problem as the starting point and following the geometric intuition outlined above, it is possible to study the problem of verifying the different levels of  correlations in bipartite states. Our main result is that the only question that can be answered without an informationally complete measurement is whether or not the state is CQ. In all of the other cases the measurement necessarily gives enough information to completely identify the state (see Table~\ref{tab:summary}). This may seem quite surprising, in particular, after one notices that the subsets satisfy the following chain of inclusions:
$$
{\rm CC} \subset {\rm CQ} \subset {\rm SEP} \subset {\rm PPT}.
$$
This serves as a demonstration of an important observation regarding these verification problems in general: the size of the subset is irrelevant, but the minimal resources are intimately connected to the \emph{geometry} of the set.

\begin{table}[h]
\centering
\begin{tabular}{| l | c | c |}
\hline
Property   & Informational  &  Minimal number  \\
 to be verified & completeness & of outcomes \\
\hline\hline
NPT & \CheckmarkBold &  $D^2$\\
ENTANGLED & \CheckmarkBold &  $D^2$\\
DISCORDANT  & \XSolidBrush & $D^2-D+1$\\
NON-CLASSICAL & \CheckmarkBold & $D^2$\\
 \hline
\end{tabular}
\caption{Summary of the main results. For the various properties of bipartite states, we indicate whether or not their verification requires an informationally complete measurement, as well as the minimal number of measurement outcomes needed for succeeding in the task. $D=d^2$ is the dimension of the bipartite system's Hilbert space.}
\label{tab:summary}
\end{table}

We now outline the main idea behind the proofs of our results. The detailed mathematical calculations are given in the Supplementary Material.

We begin by looking at the problem in the case of entanglement. Recall that in order to show that entanglement verification requires an informationally complete measurement, we need to show that for any traceless Hermitian operator $\Delta$ we can find a separable state $\varrho_2$ and a real number $\lambda$ such that $\varrho_1  = \varrho_2 + \lambda^{-1} \Delta$ is an entangled state. It is convenient to seek for a suitable $\varrho_2$ from the boundary of the set of separable states, in which case one can think of  $\lambda^{-1} \Delta$ as a small perturbation on $\varrho_2$.

To this end, we consider isotropic states which are convex mixtures of a maximally entangled state $\vert\psi\rangle$ and the maximally mixed state $\id/D$. The condition for the separability of isotropic states is known \cite{HoHo99} and, in particular, the state
\begin{equation*}
\varrho_2 = \frac{1}{d+1} \vert \psi\rangle\langle\psi \vert + \frac{1}{d(d+1)}\id
\end{equation*}
lies on the boundary of the set of separable states. 
This state is therefore separable for any maximally entangled $\vert\psi\rangle = \frac{1}{\sqrt{d}}\sum_j \vert \phi_j\rangle\otimes\vert \varphi_j\rangle$, and the problem reduces to showing that for any  $\Delta$ the bases $\{\vert\phi_i\rangle\}_{i=1}^d$ and $\{\vert\varphi_j\rangle\}_{j=1}^d$ can be chosen in such a way that the perturbed state $\varrho_1$ is entangled. This will be done using the NPT criterion which will then immediately also give us the proof of the fact that distinguishing PPT states from NPT states requires informational completeness. The geometric idea behind the proof is presented in Fig.~\ref{fig:proof_method}.

\begin{figure}
\centering
            \includegraphics[width=9cm]{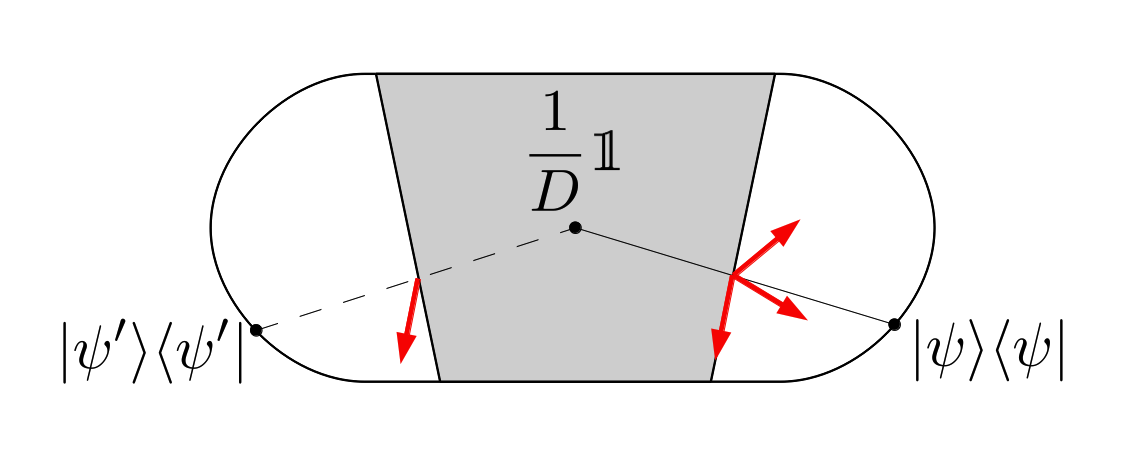} 
           \caption{The geometric idea behind the proof that entanglement detection requires an informationally complete measurement. For any direction (red arrow) we choose an isotropic state from the boundary of the set of separable states, and check if the perturbed state is entangled. If not, then we choose a different isotropic state. For any direction there always exists a separable state such that the perturbed state is entangled.}
    \label{fig:proof_method}
\end{figure}

By exploiting  the local unitary equivalence  of maximally entangled states, $\vert\psi\rangle=(U\otimes V) \vert\psi_0\rangle$ where $\vert\psi_0\rangle = \frac{1}{\sqrt{d}}\sum_j \vert j\rangle\otimes\vert j\rangle$ is the canonical maximally entangled state, we can express the partial transpose  with respect to the second factor as 
\begin{equation*}
\varrho_1^\tau = \frac{1}{d(d+1)}(U\otimes\overline{V})(\id + \flip)(U \otimes \overline{V})^* + \lambda \Delta^\tau
\end{equation*}
where $\flip $ is the flip operator. Since $\Delta^\tau$ is traceless and therefore has a negative eigenvalue, and zero is an eigenvalue of $\id + \flip$, it may seem obvious that an appropriate choice of the unitary operators would yield a negative eigenvalue for $\varrho_1^\tau$. 
However, the restriction to \emph{local} unitaries makes the problem nontrivial. The construction of the appropriate unitaries, and thus the completion of the proof, is given in the Supplementary Material.

It is worth stressing that this result cannot be obtained with a single fixed isotropic state, but one really needs the freedom in choosing the unitaries. As a counterexample,  set $U=V=\id$ and consider the operator
$$
\Delta = \vert 1\rangle \langle 1\vert \otimes   \vert 1\rangle \langle 1\vert  -   \vert d\rangle \langle d\vert   \otimes \vert d\rangle \langle d\vert
$$
so that $\Delta^\tau = \Delta$.  Now $\vert 1\rangle \otimes \vert 1 \rangle$ and $\vert d \rangle\otimes \vert d\rangle$ are eigenvectors of $\id + \flip$ corresponding to the eigenvalue $2$, and $0$ and $2$ are the only eigenvalues of $\id + \flip$. Therefore, the partially transposed state $\varrho_1^\tau $ remains positive whenever $\vert\lambda \vert$ is small. As explained previously, this means geometrically that the boundary between PPT and NPT states is flat in some directions at the points corresponding to the isotropic states. 

The remaining cases of CQ and CC states proceeds similarly, although now the condition for a state to belong to the set of discordant or nonclassical states is somewhat simpler. Indeed, by writing a state $\varrho$ in the computational basis as
\begin{equation*}
\varrho =  \sum_{ijkl} \varrho_{ijkl}\vert i \rangle \langle j \vert \otimes \vert k \rangle \langle l \vert
\end{equation*}
and by summing over  $i$ and $j$ we get the expression
\begin{equation*}
\varrho = \sum_{kl} A_{kl}(\varrho) \otimes \vert k\rangle \langle l \vert.
\end{equation*}
If the state is CQ, then the operators $A_{kl} (\varrho)$ are normal and commute \cite{GuCaCh12}. Similarly, by summing over $k$ and $l$ we get a family of operators $B_{ij} (\varrho)$, whose normality and commutativity is a necessary condition in order for the state to be QC. As CC states are those which are both CQ and QC, both families need to be checked in that case.

We show in the Supplementary Material that the state $\varrho_1 = \varrho_2 + \lambda^{-1} \Delta$ is CQ for all CQ states $\varrho_2$ and scalars $\lambda$ if and only if $\Delta = \id\otimes\Xi$ for some traceless Hermitian $\Xi$. This tells us two things: firstly, it is possible to decide if a state is CQ without identifying the state, and secondly, that the minimal number of measurement outcomes which is needed for this task is $d^4 - d^2 + 1$. The latter claim follows from Eqn.~\eqref{eqn:dim} and the fact that the dimension of the space of operators of the form $\id\otimes\Xi$ is $d^2-1$.

Quite interestingly, the remaining case of CC states again requires informational completeness.  This may seem counterintuitive since CC states are a subset of CQ states. However, one should keep in mind that their complements behave in the opposite way, namely, the discordant states are a subset of the nonclassical states. 

As a final observation we note that our result regarding quantum discord only works when the party possessing the ``quantum'' part has been fixed (i.e., we consider either CQ or QC states as our definition of the zero discord states). If we instead don't care about this and want to verify if a state belongs to the union of CQ and QC states, then we are again facing the necessity of an informationally complete measurement.

\section*{Discussion}\label{sec:discussion}

We have established that in order for a measurement to be able to decide the entanglement of  an unknown bipartite state, it is necessary for it to be informationally complete, thus allowing the full reconstruction of the unknown state. Since the number of measurement outcomes for such a measurement scales quadratically in the dimension of the system, it is clear that entanglement detection by measuring one single copy of the state is not an economical approach to the problem. Moreover, since the membership problem underlying entanglement detection is a simple yes-no question, one might hope to answer it more directly.

To establish this goal, the method of collective measurements has been brought to the context of entanglement detection. In a collective measurement, a finite number $N$ of identical systems are first prepared resulting in a factorized state $\varrho^{\otimes N}$, after which a global measurement (a POVM on the tensor product Hilbert space) is performed on the composite system. The benefit of going collective is that it allows one to measure polynomial functions of the  initial single-copy state \cite{Br04}, rather than just linear functions as in the case of standard measurements.  

An important result in  this line of research was the work of Augusiak {\em et al.} \cite{AuDeHo08}, where it was shown that the entanglement content of a two qubit state can be measured with a single binary measurement on four copies of the state. 
Up to our knowledge, such simple schemes are not known for higher dimensional systems but in principle collective measurements may be the way to go also in the case of more complex systems. 
In fact, it is known that the set of separable states is a semialgebraic set \cite{ChDo13} and as such, it is described by a finite set of polynomial inequalities $p_j(\varrho)\geq 0$. By going to collective measurements, these are transformed into linear inequalities whose validity can then be confirmed with simple binary measurements. However, it seems that for general systems there are no known bounds for the number of polynomials needed for the description of separable states, so it may happen that this method actually becomes intractable with increasing system size. The advantage of this approach is that it is computationally efficient, a fact that  serves as  motivation for  further studies of entanglement detection with collective measurements.

\section*{Acknowledgment}
JS and AT acknowledge financial support from the Italian Ministry of Education, University and Research (FIRB project RBFR10COAQ).


 \newpage

\onecolumngrid

\begin{center}
\large{{\bf Supplementary Material}}
\end{center}

\bigskip

\section*{Preliminaries}
\setcounter{equation}{0}
\renewcommand\theequation{S.\arabic{equation}}
For a given subset $\mathcal{P}$ of the state space, we want to know if for any nonzero traceless Hermitian operator $\Delta$,  there exists a state $\varrho\in\mathcal{P}$ such that 
\begin{equation}\label{eq:defk}
\kappa = \varrho + \lambda \Delta \in\mathcal{P}^c
\end{equation} 
for some real $\lambda$. Indeed, if this is the case, the membership problem for the set $\mathcal{P}$ can be solved only by informationally complete POVMs. Otherwise, we are going to determine a largest as possible subspace of traceless Hermitian operators $\Delta$ such that $\kappa\in\mathcal{P}$ for any real $\lambda$. Note that in Eqn.~\eqref{eq:defk} the complement is taken relative to the state space, so that not all $\lambda$'s are admissible, but only those ensuring the positivity of $\kappa$. We will address this issue by choosing $\varrho$ to be a full rank state, in which case it belongs to the interior of the state space, and it is then guaranteed that for small values of $\lambda$, $\kappa$ remains inside the state space. 

We first recall the definitions of the relevant subsets $\mathcal{P}$. The set of {\em separable states} consists of those states which are convex mixtures of product states:
\begin{equation*}
\varrho = \sum_i c_i \sigma_i\otimes \eta_i.
\end{equation*}
A separable state is called {\em classical-quantum} (CQ) if the states $\sigma_i$ are orthogonal one dimensional projections, and {\em quantum-classical} (QC) if $\eta_i$ are such. This means that the CQ and QC states are those which can be decomposed as
\begin{equation*}
\varrho_{\rm CQ} = \sum_i  c_i  \vert \phi_i \rangle \langle \phi_i \vert \otimes \eta_i, \qquad \varrho_{\rm QC}  = \sum_i c_i \sigma_i \otimes \vert \varphi_i\rangle \langle \varphi_i \vert
\end{equation*}
for some orthonormal bases $\{\vert \phi_i \rangle\}_{i=1}^d$ and $\{\vert \varphi_j \rangle\}_{j=1}^d$, respectively. If a state belongs to the intersection of CQ and QC states, it is called {\em classical-classical} (CC). These states admit a decomposition of the form
\begin{equation*}
\varrho_{\rm CC} = \sum_{ij} c_{ij} \vert \phi_i \rangle \langle \phi_i \vert \otimes \vert \varphi_j \rangle \langle \varphi_j\vert. 
\end{equation*}
Note that a product state $\varrho = \sigma \otimes \eta$ is CC by the spectral theorem applied to the states $\sigma$ and $\eta$. Moreover, for all states $\gamma$, the convex mixture $c_1 \varrho_{\rm CQ} + c_2 (\id/d)\otimes\gamma$ is CQ for all CQ states $\varrho_{\rm CQ}$ [respectively, $c_1 \varrho_{\rm QC} + c_2 \gamma\otimes(\id/d)$ is QC for all QC states $\varrho_{\rm QC}$].

In order to prove our results we need some useful criteria for deciding if a state belongs to the above subsets. In the case of separability we will use the PPT criterion \cite{Pe96S, HoHoHo96S} which states that the partial transpose of a separable state is a positive operator. 
In other words, a sufficient condition for entanglement is that the partial transpose has a negative eigenvalue. For the other types of correlations, we first write the state in the computational basis as
\begin{equation*}
\varrho = \sum_{ijkl} \varrho_{ijkl} \vert i \rangle \langle j \vert \otimes \vert k \rangle \langle l \vert .
\end{equation*}
After summing over $i$ and $j$, or $k$ and $l$, respectively, we obtain the two expressions
\begin{equation*}
\varrho  = \sum_{kl} A_{kl}(\varrho) \otimes \vert k \rangle \langle l \vert  = \sum_{ij} \vert i \rangle \langle j \vert \otimes B_{ij} (\varrho) .
\end{equation*}
If the state is classical-quantum, then the operators $A_{kl}(\varrho)$ are normal and commute, and if it is quantum-classical, then the same holds for $B_{ij} (\varrho)$ \cite{GuCaCh12S}. In particular, if the state is classical-classical then both families of operators are normal and commuting.

\section*{Separable states}

We start by proving that, if $\mathcal{P}$ is the set of separable states, the construction of Eqn.~\eqref{eq:defk} can always be done. 
As a candidate for $\varrho$, we fix a maximally entangled vector $\vert\psi\rangle$ and set 
\begin{equation*}
\varrho = \frac{1}{d+1} \vert \psi \rangle \langle \psi \vert + \frac{1}{d(d+1)} \id , 
\end{equation*}
so that $\varrho$ is a separable isotropic state which lies just on the boundary of the set of separable states \cite{HoHo99S}. 
This is a full rank state, so for a fixed traceless Hermitian operator $\Delta$ the operator  
$\kappa = \varrho  + \lambda \Delta$ 
is a valid state for small enough $\lambda$. 
The freedom that we still have is the choice of the maximally entangled vector $\vert\psi\rangle$. 
We will prove below that this can always be chosen in such a way that the partial transpose of $\kappa$ has a negative eigenvalue, and hence $\kappa$ is entangled. Before going into the details of the proof, we elaborate this idea a bit further. 

We use the fact that $\vert\psi\rangle$ can be obtained from the canonical maximally entangled vector $\vert\psi_0\rangle = 1/\sqrt{d}\sum_j \vert j\rangle \otimes \vert j \rangle$ by local unitaries: $\vert\psi\rangle = (U\otimes V)\vert\psi_0\rangle$. With this we can express the partial transpose of $\kappa$ with respect to the second factor as  
\begin{align*}
\kappa^\tau &= \frac{1}{d(d+1)} (U\otimes \overline{V}) (\id  + \flip) (U\otimes \overline{V})^*  + \lambda \Delta^\tau \\
&=  \frac{2}{d(d+1)}(U\otimes \overline{V})\left[ \frac{\id  + \flip}{2} + \lambda \frac{d(d+1)}{2}  (U\otimes \overline{V})^*\Delta^\tau  (U\otimes \overline{V})  \right] (U\otimes \overline{V}) ^*
\end{align*}
where $\flip $ is the flip operator $\flip (\vert\phi\rangle\otimes\vert\varphi\rangle )= \vert\varphi\rangle \otimes \vert\phi\rangle$.  Since the spectrum is invariant under the  unitaries, and the scaling of $\lambda$ as well as the complex conjugation of $V$ are irrelevant, it suffices to consider the operator
\begin{equation*}
\widetilde{\kappa}=\frac{\id  + \flip}{2} +  \lambda (U\otimes V)^*\Delta^\tau  (U\otimes V). 
\end{equation*}

Let now $W:\complex^2 \otimes \complex^2\to\complex^d \otimes \complex^d$ denote the canonical injection from $\complex^2 \otimes \complex^2$ into $\complex^d \otimes \complex^d$ (i.e., $W\vert i\rangle \otimes \vert j\rangle = \vert i\rangle \otimes \vert j\rangle$ for $i,j=1,2$). It is clearly sufficient to show that the operator 
\begin{equation*}
W^* \widetilde{\kappa}W=W^*\left[\frac{\id  + \flip}{2} +  \lambda (U\otimes V)^*\Delta^\tau  (U\otimes V)\right]W 
\end{equation*}
has a negative eigenvalue. By defining the orthonormal basis 
\begin{align*}
\vert f_1 \rangle & = \vert 1\rangle\otimes \vert 1\rangle & \vert f_2 \rangle & = \vert 2\rangle\otimes \vert 2\rangle \\
\vert f_3 \rangle & = \frac{1}{\sqrt{2}}(\vert 1\rangle\otimes \vert 2\rangle + \vert 2\rangle\otimes \vert 1\rangle) & \vert f_4 \rangle & = \frac{1}{\sqrt{2}}(\vert 1\rangle\otimes \vert 2\rangle - \vert 2\rangle\otimes \vert 1\rangle) \,.
\end{align*}
of $\complex^2 \otimes \complex^2$, we find that the matrix representation of $W^* \widetilde{\kappa}W$ with respect to this basis is 
\begin{equation}\label{eqn:matrix_form}
W^* \widetilde{\kappa}W = \left(\begin{array}{cc}
\id_{\C^3} & 0 \\
0 & 0
\end{array}\right)
+ \lam
\left(\begin{array}{cc}
A & a \\
a^* & \alpha
\end{array}\right)
\end{equation}
for some $3\times 3$ Hermitian matrix $A$, column vector $a\in\C^3$ and scalar $\alpha\in\R$. We now need the following Lemma.

\begin{lemma}\label{lemma:unitaries}
With the above notations, $U$ and $V$ can always be chosen in such a way that $a$ or $\alpha$ is nonzero.
\end{lemma}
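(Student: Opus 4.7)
The plan is to prove the contrapositive: assuming that $a = 0$ and $\alpha = 0$ for \emph{every} choice of unitaries $U, V$ on $\mathbb{C}^d$, I will derive $\Delta^\tau = 0$, contradicting the standing hypothesis $\Delta \neq 0$ (partial transposition being a linear bijection).

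First, I would unpack the four scalar equations $a_1 = a_2 = a_3 = 0$ and $\alpha = 0$ into conditions on $\Delta^\tau$. Setting $|u_i\rangle = U|i\rangle$ and $|v_j\rangle = V|j\rangle$, a direct computation shows that the orthonormal quadruple $\{(U \otimes V)W|f_k\rangle\}_{k=1}^4$ equals
\begin{equation*}
\bigl\{|u_1 v_1\rangle,\ |u_2 v_2\rangle,\ \tfrac{1}{\sqrt{2}}(|u_1 v_2\rangle + |u_2 v_1\rangle),\ \tfrac{1}{\sqrt{2}}(|u_1 v_2\rangle - |u_2 v_1\rangle)\bigr\}.
\end{equation*}
Hence the vanishing of both $a$ and $\alpha$ is equivalent to
\begin{equation*}
\langle u_i v_j | \Delta^\tau | u_1 v_2 - u_2 v_1 \rangle = 0 \quad \text{for } i, j \in \{1, 2\},
\end{equation*}
where $(|u_1\rangle, |u_2\rangle)$ and $(|v_1\rangle, |v_2\rangle)$ range over all orthonormal pairs in $\mathbb{C}^d$ as $U, V$ vary over all unitaries.

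Next I would specialize to $(i, j) = (1, 2)$, which gives $\langle u_1 v_2 | \Delta^\tau | u_1 v_2 \rangle = \langle u_1 v_2 | \Delta^\tau | u_2 v_1 \rangle$. The left-hand side is real and depends only on $|u_1\rangle$ and $|v_2\rangle$, whereas the right-hand side also depends on $|u_2\rangle, |v_1\rangle$. Replacing $|v_1\rangle$ by $e^{i\psi}|v_1\rangle$ (still a unit vector orthogonal to $|v_2\rangle$, and arising from a different admissible $V$) multiplies the right-hand side by $e^{i\psi}$ while leaving the left-hand side unchanged. Varying $\psi$ forces $\langle u_1 v_2 | \Delta^\tau | u_2 v_1 \rangle = 0$, and consequently $\langle u_1 v_2 | \Delta^\tau | u_1 v_2 \rangle = 0$. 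Since $|u_1\rangle$ and $|v_2\rangle$ are arbitrary unit vectors in $\mathbb{C}^d$, one obtains
\begin{equation*}
\mathrm{tr}\bigl((|x\rangle\langle x| \otimes |w\rangle\langle w|)\Delta^\tau\bigr) = 0 \quad \text{for all unit } |x\rangle, |w\rangle \in \mathbb{C}^d.
\end{equation*}

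To conclude, I would invoke the fact that rank-one projections $|x\rangle\langle x|$ span the real vector space of Hermitian operators on $\mathbb{C}^d$ via the spectral theorem; consequently the products $|x\rangle\langle x| \otimes |w\rangle\langle w|$ span (as real linear combinations) the $d^4$-dimensional real space of Hermitian operators on $\mathbb{C}^d \otimes \mathbb{C}^d$. This yields $\mathrm{tr}(H \Delta^\tau) = 0$ for every Hermitian $H$, hence $\Delta^\tau = 0$ and therefore $\Delta = 0$, contradicting the assumption. The only mildly delicate step is the phase-freedom trick; the translation from $(a, \alpha)$ to inner products is routine bookkeeping and the final spanning argument is standard.
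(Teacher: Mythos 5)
Your proof is correct, and it takes a genuinely different route from the paper's. The paper argues constructively: it first picks $U$ and $V_0$ so that the compressed matrix $\widetilde{\Delta}=W^*(U\otimes V_0)^*\Delta^\tau(U\otimes V_0)W$ is nonzero, and then, if no nonzero entry already sits in the fourth row, it conjugates the second factor by one of three explicit $2\times 2$ block unitaries (${\rm diag}(1,-1)$ or two phased swaps $T_\pm$) and checks by a short case analysis that at least one of them pushes a nonzero entry into the fourth row, i.e.\ into $(a,\alpha)$. You instead negate the claim for \emph{all} $U,V$, translate the vanishing of $(a,\alpha)$ into $\langle u_iv_j|\Delta^\tau|u_1v_2-u_2v_1\rangle=0$, and exploit the phase freedom $|v_1\rangle\mapsto \e^{i\psi}|v_1\rangle$ --- implemented by composing $V$ with ${\rm diag}(\e^{i\psi},1,\dots,1)$, which leaves $|v_2\rangle$ and the orthonormality intact --- to decouple the diagonal term $\langle u_1v_2|\Delta^\tau|u_1v_2\rangle$ from the cross term; since $|u_1\rangle$ and $|v_2\rangle$ are independently arbitrary unit vectors and rank-one projections span the Hermitian operators, the real span of $|x\rangle\langle x|\otimes|w\rangle\langle w|$ is the full $d^4$-dimensional space of Hermitian operators on $\C^d\otimes\C^d$, whence $\Delta^\tau=0$ and $\Delta=0$. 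All steps check out. Your argument is shorter and avoids the case split; in fact it only uses the combination $a_3+\alpha$, so it proves the slightly stronger statement that $a_3$ or $\alpha$ can be made nonzero. What the paper's version buys in exchange is explicitness: it exhibits concrete unitaries, which is convenient if one wants to actually construct the separable state whose perturbation becomes NPT.
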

\begin{proof}
Since $\Delta$ is nonzero, there exist $p,q,r,s\in\{1,2,\ldots,d\}$ for which
\begin{equation*}
\langle p \vert \otimes \langle q\vert  \Delta^\tau \vert r\rangle \otimes \vert s\rangle \neq 0.
\end{equation*}
Let $U$ and $V_0$ be two unitary operators on $\C^d$ such that $\{\vert p\rangle ,\vert r\rangle\} \subseteq U\complex^2$ and $\{ \vert q\rangle, \vert s\rangle\} \subseteq V_0\complex^2$. 
Then the previous inequality implies that
\begin{equation*}
\widetilde{\Delta} = W^*(U\otimes V_0)^*\Delta^\tau(U\otimes V_0)W \neq 0
\end{equation*}
since $\{\vert p\rangle\otimes \vert q\rangle , \vert r\rangle \otimes \vert s\rangle \} \subseteq \ran (U\otimes V_0)W$. We can now assume that $\widetilde{\Delta}_{4\,i} = 0$ for all $i\in\{1,2,3,4\}$, where the matrix elements are with respect to the basis $\{\vert f_j \rangle\}_{j=1}^4$, as otherwise the lemma is proved with $V= V_0 $.\\
If $T$ is any unitary operator on $\C^d$ which can be written in the block form as
$$
T =
\left(\begin{array}{cc}
T' & 0 \\
0 & \id_{\C^{d-2}}
\end{array}\right)
\qquad \text{with} \qquad
T' =
\left(\begin{array}{cc}
t_{1\,1} & t_{1\,2} \\
t_{2\,1} & t_{2\,2}
\end{array}\right)
$$
then $(\id\otimes T)WW^* = WW^*(\id\otimes T)$ and hence
\begin{align}
 W^*(U\otimes V_0T)^*\Delta^\tau(U\otimes V_0T)W &= W^*(\id\otimes T)^*W W^*(U\otimes V_0)^*\Delta^\tau(U\otimes V_0)W W^*(\id\otimes T)W \nonumber\\
&  = W^*(\id\otimes T)^*W \widetilde{\Delta} W^*(\id\otimes T)W , \label{eq:casino}
\end{align}
where the matrix representation of $W^*(\id\otimes T)W$ with respect to the basis $\{\vert f_j \rangle\}_{j=1}^4$ is given by 
\begin{equation}\label{eq:W*(1T)W}
W^*(\id\otimes T)W = \frac{1}{2}\,
\left(\begin{array}{cccc}
2 t_{1\,1} & 0 & \sqrt{2} t_{1\,2} & \sqrt{2} t_{1\,2} \\
0 & 2 t_{2\,2} & \sqrt{2} t_{2\,1} & -\sqrt{2} t_{2\,1} \\
\sqrt{2} t_{2\,1} & \sqrt{2} t_{1\,2} & t_{1\,1} + t_{2\,2} &  t_{2\,2} - t_{1\,1} \\
\sqrt{2} t_{2\,1} & -\sqrt{2} t_{1\,2} & t_{2\,2} - t_{1\,1} &  t_{1\,1} + t_{2\,2}
\end{array}\right)
\end{equation}
We now split the proof into two cases.

(1) Case $\widetilde{\Delta}_{3\,3} \neq 0$. Making the choice
$$
T \equiv T_0 =
\left(\begin{array}{cc}
T'_0 & 0 \\
0 & \id_{\C^{d-2}}
\end{array}\right)
\qquad \text{with} \qquad
T'_0 =
\left(\begin{array}{cc}
1 & 0 \\
0 & -1
\end{array}\right)
$$
Eqns.~\eqref{eq:casino} and \eqref{eq:W*(1T)W} yield
$$
[W^*(U\otimes V_0T_0)^*\Delta^\tau(U\otimes V_0T_0)W]_{4\,4} = \widetilde{\Delta}_{3\,3} \neq 0 .
$$
The claim then follows with $V = V_0T_0$.

(2) Case $\widetilde{\Delta}_{3\,3} = 0$. Picking the following two unitaries $T_+$, $T_-$
$$
T \equiv T_\pm =
\left(\begin{array}{cc}
T'_\pm & 0 \\
0 & \id_{\C^{d-2}}
\end{array}\right)
\qquad \text{with} \qquad
T'_\pm =
\left(\begin{array}{cc}
0 & \e^{\pm i \frac{\pi}{4}} \\
\e^{\mp i \frac{\pi}{4}} & 0
\end{array}\right)
$$
we have
\begin{align*}
[(U\otimes V_0T_{\pm})^*\Delta^\tau(U\otimes V_0T_{\pm})]_{4\,1} = -\frac{\widetilde{\Delta}_{2\,3} \pm i \widetilde{\Delta}_{1\,3}}{2} \\
[(U\otimes V_0T_{\pm})^*\Delta^\tau(U\otimes V_0T_{\pm})]_{4\,2} = \frac{\widetilde{\Delta}_{1\,3} \mp i \widetilde{\Delta}_{2\,3}}{2}
\end{align*}
and
\begin{align*}
[(U\otimes V_0T_{\pm})^*\Delta^\tau(U\otimes V_0T_{\pm})]_{4\,3} & = \frac{\widetilde{\Delta}_{1\,1} - \widetilde{\Delta}_{2\,2}}{2} \mp i\Re\widetilde{\Delta}_{1\,2} \\
[(U\otimes V_0T_{\pm})^*\Delta^\tau(U\otimes V_0T_{\pm})]_{4\,4} & = \frac{\widetilde{\Delta}_{1\,1} + \widetilde{\Delta}_{2\,2}}{2} \mp \Im\widetilde{\Delta}_{1\,2} .
\end{align*}
Since not all matrix elements $\widetilde{\Delta}_{1\,1}$, $\widetilde{\Delta}_{2\,2}$, $\widetilde{\Delta}_{1\,2}$, $\widetilde{\Delta}_{1\,3}$ and $\widetilde{\Delta}_{2\,3}$ can be zero, we must have $[(U\otimes V_0T_+)^*\Delta^\tau( U\otimes V_0T_+)]_{4\,i} \neq 0$ or $[(U\otimes V_0T_-)^*\Delta^\tau(U\otimes V_0T_-)]_{4\,i} \neq 0$ for some $i\in\{1,2,3,4\}$. Hence the proposition follows by choosing $V = V_0T_+$ or $V = V_0T_-$.
\end{proof}

With this lemma, we can now prove the main result of this section.
\begin{proposition}
For any nonzero traceless Hermitian operator $\Delta$, there exist unitaries $U$ and $V$ and a real number $\lambda$ such that $\kappa$ is a state and $\kappa^\tau$ has a negative eigenvalue.
\end{proposition}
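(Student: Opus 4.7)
The plan is to combine Lemma~\ref{lemma:unitaries} with a short perturbative argument on the $4\times 4$ Hermitian matrix
\begin{equation*}
M(\lambda) := W^*\widetilde{\kappa}W = \begin{pmatrix}\id_{\C^3} & 0 \\ 0 & 0\end{pmatrix}+ \lambda\begin{pmatrix} A & a \\ a^* & \alpha\end{pmatrix}
\end{equation*}
already produced in Eqn.~\eqref{eqn:matrix_form}. Because $\varrho$ is a full-rank isotropic state, $\kappa=\varrho+\lambda\Delta$ is automatically positive for every $|\lambda|$ small enough, so the only real task is to exhibit one such $\lambda$ for which $M(\lambda)$ fails to be positive semidefinite; the excerpt already explains why this forces $\kappa^\tau$ to have a negative eigenvalue. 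With $U,V$ chosen as in Lemma~\ref{lemma:unitaries}, either $\alpha\neq 0$ or $a\neq 0$, and I would treat the two possibilities in parallel.

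If $\alpha\neq 0$, then $\langle f_4\mid M(\lambda)\mid f_4\rangle = \lambda\alpha$, so choosing $\lambda$ small and of sign opposite to $\alpha$ already produces a negative diagonal entry, hence a negative eigenvalue. If instead $\alpha=0$ but $a\neq 0$, I would pick an index $j\in\{1,2,3\}$ with $a_j\neq 0$ and look at the $2\times 2$ principal submatrix of $M(\lambda)$ on rows/columns $\{j,4\}$. Its determinant equals
\begin{equation*}
(1+\lambda A_{jj})\cdot 0 - |\lambda a_j|^2 = -\lambda^2 |a_j|^2 ,
\end{equation*}
which is strictly negative for every $\lambda\neq 0$; hence this principal submatrix has a negative eigenvalue, and Cauchy's interlacing theorem propagates it to the full $M(\lambda)$.

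Once $M(\lambda)$ is known to fail positivity at some admissible $\lambda$, the min-max characterisation of eigenvalues together with the isometry $W^*W=\id$ immediately transfers this negativity to $\widetilde{\kappa}$, and then to $\kappa^\tau$ via the unitary conjugation and positive rescaling already isolated in the excerpt. The main obstacle is precisely the degenerate case $\alpha=0$, $a\neq 0$: there $\langle f_4\mid M(\lambda)\mid f_4\rangle$ stays zero to all orders in $\lambda$, so no diagonal perturbation argument works and one is forced to exploit the off-diagonal coupling of the fourth direction to one of the first three via the $2\times 2$-minor-plus-interlacing detour. This is also why Lemma~\ref{lemma:unitaries} was worth stating as the disjunction ``$a\neq 0$ or $\alpha\neq 0$'' rather than as a single genericity condition.
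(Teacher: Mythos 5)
Your proposal is correct and follows the paper's own route: the same reduction to the $4\times 4$ matrix $W^*\widetilde{\kappa}W$ of Eqn.~\eqref{eqn:matrix_form}, the same reliance on Lemma~\ref{lemma:unitaries}, and the same case split on $\alpha\neq 0$ versus $\alpha=0$, $a\neq 0$. The only difference is the final linear-algebra step: the paper diagonalizes $A$ and expands the full $4\times 4$ determinant to leading order in $\lambda$, whereas you detect the negative eigenvalue via a negative diagonal entry (first case) or a negative $2\times 2$ principal minor plus interlacing (second case) --- a slightly more elementary endgame, but not a different proof.
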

\begin{proof}
By our previous discussion, it is sufficient to show that for some choice of $U$ and $V$, the matrix $W^* \widetilde{\kappa}W$ in Eqn.~\eqref{eqn:matrix_form} has a negative eigenvalue for all small $\lambda$'s. To this end, let $U$ and $V$ be as in Lemma~\ref{lemma:unitaries} and  let $S$ be a $3\times 3$ unitary matrix such that $S^*AS = {\rm diag}(\mu_1,\mu_2,\mu_3)$. Then
$$
W^* \widetilde{\kappa}W
=\left(\begin{array}{cc}
\id_{\C^3} & 0 \\
0 & 0
\end{array}\right)
+ \lam
\left(\begin{array}{cc}
A & a \\
a^* & \alpha
\end{array}\right) =
\left(\begin{array}{cc}
S & 0 \\ 0 & 1
\end{array}\right)
\left(\begin{array}{cccc}
1+\lam\mu_1 & 0 & 0 & \lam b_1\\
0 & 1+\lam\mu_2 & 0 & \lam b_2\\
0 & 0 & 1+\lam\mu_3 & \lam b_3\\
\lam \overline{b_1} & \lam \overline{b_2} & \lam \overline{b_3} & \lam\alpha
\end{array}\right)
\left(\begin{array}{cc}
S^* & 0 \\ 0 & 1
\end{array}\right)
$$
where $b = S^* a$. Therefore, we can easily compute
\begin{align*}
 \dett (W^* \widetilde{\kappa}W)=\, & \lam\alpha (1+\lam\mu_1)(1+\lam\mu_2)(1+\lam\mu_3) - \lam^2[|b_1|^2(1+\lam\mu_2)(1+\lam\mu_3) \\
&+ |b_2|^2(1+\lam\mu_1)(1+\lam\mu_3)  + |b_3|^2(1+\lam\mu_1)(1+\lam\mu_2)] .
\end{align*}
If $\alpha\neq 0$, then $\dett(W^* \widetilde{\kappa}W) = \lam\alpha + \mathcal{O}(\lam^2) $ and hence for small nonzero $\lam$'s with the opposite sign of $\alpha$ the determinant is negative. This implies that $W^* \widetilde{\kappa}W$ must have a negative eigenvalue. If $\alpha = 0$, then we must have $b\neq 0$ in which case $  \dett (W^* \widetilde{\kappa}W) = -\lam^2 \no{b}^2 + \mathcal{O}(\lam^3)$ which is negative for small nonzero $\lam$'s. Therefore we again conclude that  $W^* \widetilde{\kappa}W$ has a negative eigenvalue. This completes the proof.
\end{proof}

\begin{corollary}
In order to determine if a state is  PPT or NPT, the measurement must be informationally complete.
\end{corollary}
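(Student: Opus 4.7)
The corollary is essentially a repackaging of the preceding proposition within the general verification framework set out in the main text, so the plan is short. First I would recall the criterion established there: a POVM $E$ decides the PPT/NPT membership question if and only if its associated subspace $\mathcal{X}_E$ of traceless Hermitian operators contains no nonzero $\Delta$ admitting a decomposition $\Delta = \lambda(\varrho_1-\varrho_2)$ with $\varrho_1$ PPT and $\varrho_2$ NPT. Hence, to conclude that $E$ must be informationally complete, it suffices to exhibit, for \emph{every} nonzero traceless Hermitian $\Delta$, such a decomposition.

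Second, I would apply the proposition directly. Given any nonzero traceless Hermitian $\Delta$, the proposition produces unitaries $U$, $V$ and a real scalar $\lambda$ such that $\kappa = \varrho + \lambda \Delta$ is a valid quantum state and $\kappa^\tau$ has a negative eigenvalue. The reference state $\varrho = \tfrac{1}{d+1} \vert\psi\rangle\langle\psi\vert + \tfrac{1}{d(d+1)} \id$ is a separable isotropic state and is therefore PPT, while $\kappa$ is NPT by construction. Rearranging gives $\Delta = \lambda^{-1}(\kappa - \varrho)$, which is precisely the forbidden decomposition with $\kappa$ in the NPT set and $\varrho$ in the PPT set.

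Therefore no nonzero traceless Hermitian operator can lie in $\mathcal{X}_E$, which forces $\mathcal{X}_E = \{0\}$, i.e., $E$ is informationally complete; moreover, by Eqn.~\eqref{eqn:dim} its minimal number of linearly independent outcomes is exactly $D^2$. Since the genuine technical work — selecting $U$ and $V$ so that the isotropic perturbation breaks positivity of the partial transpose in a prescribed direction — has already been executed in Lemma~\ref{lemma:unitaries} and the preceding proposition, the corollary itself carries no real obstacle; its only role is to translate that existence statement into the language of POVMs via the membership-problem criterion and the dimension identity.
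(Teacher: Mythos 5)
Your proposal is correct and matches the paper's (implicit) argument exactly: the corollary is an immediate consequence of the preceding proposition, obtained by noting that the separable isotropic state $\varrho$ is PPT while $\kappa=\varrho+\lambda\Delta$ is NPT, so every nonzero traceless Hermitian $\Delta$ connects the two sets and $\mathcal{X}_E$ must be trivial. Nothing to add.
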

\begin{corollary}
In order determine if a state is separable or entangled, the measurement must be informationally complete.
\end{corollary}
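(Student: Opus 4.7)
The plan is to reduce this corollary directly to the Proposition that precedes it, using only the elementary fact (recalled in the Preliminaries) that any state whose partial transpose has a negative eigenvalue is entangled. In other words, all the real analytic work has already been done in the Proposition; what remains is a one-step translation between the PPT/NPT dichotomy and the SEP/ENT dichotomy.

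Concretely, to show that any POVM $E$ that can decide separability is informationally complete, I would argue by contrapositive on the associated subspace $\xx_E$ of traceless Hermitian operators orthogonal to the POVM elements: it suffices to show that every nonzero traceless Hermitian $\Delta$ admits a decomposition $\Delta = \lambda^{-1}(\varrho_1 - \varrho_2)$ with $\varrho_2 \in \mathrm{SEP}$ and $\varrho_1 \in \mathrm{ENT}$. Given such a $\Delta$, I would apply the Proposition to produce unitaries $U,V$ and a real $\lambda$ such that, with $\varrho = \tfrac{1}{d+1}|\psi\rangle\langle\psi| + \tfrac{1}{d(d+1)}\id$ and $|\psi\rangle = (U\otimes V)|\psi_0\rangle$, the operator $\kappa = \varrho + \lambda\Delta$ is a state whose partial transpose has a negative eigenvalue. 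The state $\varrho$ is separable since isotropic states with this particular mixing parameter lie on the boundary of the separable set (as already used, via \cite{HoHo99S}), while $\kappa$, failing the PPT criterion, is entangled. Setting $\varrho_2 = \varrho$ and $\varrho_1 = \kappa$ gives the required configuration.

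From here the conclusion is immediate: no nonzero $\Delta$ can lie in $\xx_E$, so $\xx_E = \{0\}$, and by the dimension identity $D^2 = \dim E + \dim \xx_E$ the POVM $E$ must have at least $D^2$ linearly independent effects, i.e.\ be informationally complete. I do not anticipate any obstacle at this stage; the genuinely hard work was the case analysis in Lemma~\ref{lemma:unitaries} and the determinant expansion in the Proposition, which exploit the freedom in choosing the maximally entangled direction to perturb the isotropic boundary state into the NPT region regardless of the prescribed direction $\Delta$. The corollary inherits this strength via the soft implication $\mathrm{NPT} \subset \mathrm{ENT}$.
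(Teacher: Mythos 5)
Your proposal is correct and follows exactly the route the paper intends: the Proposition supplies, for every nonzero traceless Hermitian $\Delta$, a separable isotropic state $\varrho$ and a perturbation $\kappa=\varrho+\lambda\Delta$ that is NPT and hence entangled, which forces $\mathcal{X}_E=\{0\}$ for any POVM deciding separability. No gaps; this matches the paper's (implicit) proof of the corollary.
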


\section*{CQ and QC states}
We now move to the case of CQ and QC states. In other words the subset $\mathcal{P}$ is now the CQ states or the QC states. We present the proof only for the CQ case, the proof for QC being equivalent with the obvious change of the local systems. The following proposition rules out the necessity of informational completeness. 

\begin{proposition}\label{prop:IX}
If $\Delta = \id \otimes \Xi$ for some traceless Hermitian operator $\Xi$, then  $\kappa = \varrho_{\rm CQ} + \lambda \Delta$ is a CQ state for all CQ states $\varrho_{\rm CQ}$ and real numbers $\lambda$ such that $\kappa$ is a valid state. 
\end{proposition}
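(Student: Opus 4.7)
The plan is to exploit the fact that any CQ state is block-diagonal in the basis $\{\vert\phi_i\rangle\}$ of the first factor, and that adding $\lambda \id \otimes \Xi$ preserves this block-diagonal structure because $\id = \sum_i \vert\phi_i\rangle\langle\phi_i\vert$ in \emph{any} orthonormal basis. Concretely, starting from a CQ decomposition $\varrho_{\rm CQ} = \sum_i c_i \vert\phi_i\rangle\langle\phi_i\vert \otimes \eta_i$, I would rewrite
\begin{equation*}
\kappa = \varrho_{\rm CQ} + \lambda\, \id\otimes \Xi = \sum_i \vert\phi_i\rangle\langle\phi_i\vert \otimes \widetilde{\eta}_i , \qquad \widetilde{\eta}_i := c_i \eta_i + \lambda \Xi ,
\end{equation*}
so that $\kappa$ is automatically of CQ-type, provided only that the operators $\widetilde{\eta}_i$ can be renormalized into legitimate states.

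Next I would verify the two requirements. Since $\{\vert\phi_i\rangle\langle\phi_j\vert\}_{i,j}$ block-decomposes operators on the first tensor factor, positivity of $\kappa$ is equivalent to positivity of each diagonal block; the hypothesis that $\kappa$ is a valid state therefore forces $\widetilde{\eta}_i \geq 0$ for every $i$. Setting $\widetilde{c}_i := \tr{\widetilde{\eta}_i} = c_i + \lambda \tr{\Xi} = c_i$ (using tracelessness of $\Xi$), we see that $\widetilde{c}_i \geq 0$ and $\sum_i \widetilde{c}_i = 1$.

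Finally, for each index with $\widetilde{c}_i = c_i > 0$ I would define the state $\widetilde{\eta}'_i := \widetilde{\eta}_i / \widetilde{c}_i$ and, for the indices with $c_i = 0$, simply drop the (vanishing) term. This yields the manifestly CQ decomposition
\begin{equation*}
\kappa = \sum_{i : c_i > 0} \widetilde{c}_i\, \vert\phi_i\rangle\langle\phi_i\vert \otimes \widetilde{\eta}'_i ,
\end{equation*}
using the \emph{same} orthonormal basis $\{\vert\phi_i\rangle\}$ as in $\varrho_{\rm CQ}$.

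There is really no hard step here: the key observation is that $\id\otimes\Xi$ perturbs only the ``quantum'' factor within each block and so never breaks the classicality of the first factor. The one thing to be careful about is that one should \emph{not} try to prove $\widetilde{\eta}_i \geq 0$ directly from the hypotheses on $\Xi$ and $\lambda$ (which would be false in general); rather, it is already built into the assumption that $\kappa$ is a valid state, and follows from block-diagonality in the $\{\vert\phi_i\rangle\}$ basis.
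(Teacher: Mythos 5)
Your proof is correct and follows essentially the same route as the paper's, which simply absorbs the weights into the second factors and observes that $\kappa=\sum_i\vert\phi_i\rangle\langle\phi_i\vert\otimes(\eta_i+\lambda\Xi)$ is manifestly of CQ form in the same basis. Your extra care in deducing positivity of each block from the block-diagonal structure (rather than from hypotheses on $\Xi$ and $\lambda$) is exactly the detail the paper leaves implicit.
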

\begin{proof}
Using the decomposition $\varrho_{\rm CQ} = \sum_i \vert \phi_i \rangle \langle \phi_i \vert \otimes \eta_i$ for some orthonormal basis $\{\vert \phi_i \rangle\}_{i=1}^d$, we immediately have $\kappa = \sum_i \vert \phi_i \rangle \langle \phi_i \vert \otimes (\eta_i  + \lambda \Xi)$ which proves the claim.
\end{proof}

The rest of this section goes to showing that whenever $\Delta \neq \id \otimes \Xi$, we can find a CQ state such that $\kappa$ is not CQ. In order to get a grasp of this situation, we decompose $\Delta$ as we did with states:
\begin{equation*}
\Delta = \sum_{ijkl} \Delta_{ijkl} \vert i \rangle \langle j \vert \otimes \vert k \rangle \langle l \vert  =  \sum_{kl} A_{kl}(\Delta)\otimes \vert k \rangle \langle l \vert.
\end{equation*}
If all of the operators $A_{kl} (\Delta)$ are multiples of the identity, i.e., $A_{kl} (\Delta) =a_{kl} (\Delta)\id$, then $\Delta = \sum_{kl} a_{kl}(\Delta) \id \otimes\vert k \rangle \langle l \vert =\id \otimes \Xi$ where $\Xi = \sum_{kl} a_{kl}(\Delta)\vert k \rangle \langle l \vert $. 
Conversely, if $\Delta = \id \otimes \Xi$, then $\Delta = \sum_{kl} \Xi_{kl} \id \otimes \vert k \rangle \langle l \vert = \sum_{kl} A_{kl}(\Delta) \otimes\vert k \rangle \langle l \vert$ with $A_{kl}(\Delta) = \Xi_{kl} \id$. In other words, the operators $\Delta = \id \otimes \Xi$ are exactly those for which all of the operators $A_{kl} (\Delta)$ are multiples of the identity. In order to exploit this observation, we prove the  following simple lemma.
\begin{lemma}\label{lem:l1}
If $A$ is not a multiple of the identity, then there exists a state $\sigma$ such that $[A,\sigma]\neq 0$. 
\end{lemma}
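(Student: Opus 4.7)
The plan is to prove the contrapositive: if $[A,\sigma]=0$ for every state $\sigma$, then $A$ is a scalar multiple of the identity. The key observation is that the pure states (one-dimensional projections) $\kb{\phi}{\phi}$ associated with arbitrary unit vectors $\ket{\phi}\in\C^d$ are themselves states, so the commutation hypothesis forces $[A,\kb{\phi}{\phi}]=0$ for all unit $\ket{\phi}$.

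From there, I would apply both sides of $A\kb{\phi}{\phi}=\kb{\phi}{\phi}A$ to $\ket{\phi}$. The left side gives $A\ket{\phi}$, while the right side gives $\ket{\phi}\ip{\phi}{A\phi}$, so $A\ket{\phi}=\ip{\phi}{A\phi}\ket{\phi}$. Hence every unit vector is an eigenvector of $A$. A standard argument then finishes: if $\ket{\phi}$ and $\ket{\varphi}$ are linearly independent unit eigenvectors with eigenvalues $\mu$ and $\nu$, then applying $A$ to a normalized combination $\alpha\ket{\phi}+\beta\ket{\varphi}$ and requiring the result to be a scalar multiple of the same combination forces $\mu=\nu$. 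Thus all eigenvalues coincide and $A=\mu\id$, contradicting the assumption that $A$ is not a multiple of the identity. The contrapositive therefore gives the desired state $\sigma$ with $[A,\sigma]\neq 0$ (in fact one may take $\sigma$ to be a rank-one projection).

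There is essentially no obstacle here — the statement is a standard ``the centralizer of the set of states is trivial'' fact, and the proof requires only elementary linear algebra. The only minor subtlety is that $A$ need not be Hermitian in the intended application (the off-diagonal blocks $A_{kl}(\Delta)$ of a Hermitian $\Delta$ satisfy $A_{kl}(\Delta)^*=A_{lk}(\Delta)$ and need not be self-adjoint), but the argument above never uses Hermiticity of $A$: eigenvectors and the scalar-operator conclusion go through for arbitrary $A\in\elle{\C^d}$. Hence the lemma is established in the generality needed for the subsequent application to the operators $A_{kl}(\Delta)$.
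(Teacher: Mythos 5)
Your proof is correct, and it takes a slightly different route from the paper's. The paper also argues by contraposition, but via a span argument: since $\sigma=\tfrac{1}{d}(\id+\lambda\Delta)$ is a state for small $\lambda$ and any traceless Hermitian $\Delta$, commuting with all states forces $A$ to commute with every traceless Hermitian operator, hence (together with $\id$) with the whole matrix algebra, so $A$ lies in the trivial center and must be scalar. You instead use only the rank-one projections $\kb{\phi}{\phi}$: the relation $A\ket{\phi}=\ip{\phi}{A\phi}\ket{\phi}$ makes every unit vector an eigenvector, and the standard two-eigenvalue comparison forces all eigenvalues to coincide. Your version is more self-contained (it proves the ``central implies scalar'' step rather than invoking it) and pinpoints that a much smaller family of states already suffices; the paper's version is shorter and dovetails with the perturbation-of-the-identity constructions used elsewhere in the Supplementary Material. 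Your remark that $A$ need not be Hermitian in the intended application (to the off-diagonal blocks $A_{pq}(\Delta)$ with $p\neq q$) is well taken, and indeed neither argument uses Hermiticity of $A$, so both establish the lemma in the required generality.
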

\begin{proof}
Suppose  that $[A,\sigma] = 0$ for all states $\sigma$. Since for any traceless Hermitian $\Delta$, the operator $\sigma = \frac{1}{d} (\id + \lambda \Delta)$ is a state for a small enough nonzero real number $\lambda$, we have that $[A, \Delta] = 0$ for all $\Delta$. But since the traceless Hermitian matrices and the identity span the whole space of matrices, $A$ commutes with any matrix. Hence, $A $ must be a multiple of the identity. 
\end{proof}

We can now prove the following result which completes this section.
\begin{proposition}\label{prop:lcc2}
If $\Delta$ is a traceless Hermitian matrix such that $\Delta \neq\id \otimes \Xi$, then there exists a CQ state $\varrho_{\rm CQ}$ and $\lambda$ such that $\kappa$ is a state but not CQ. 
\end{proposition}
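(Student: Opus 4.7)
I would prove the contrapositive: suppose $\kappa = \varrho_{\rm CQ} + \lambda\Delta$ is CQ for every CQ state $\varrho_{\rm CQ}$ and every $\lambda$ making $\kappa$ a valid state, and deduce that $\Delta = \id\otimes \Xi$. I would specialize to product-form CQ states $\varrho_{\rm CQ}=\sigma\otimes\eta$ with $\sigma$ full-rank, so that $\kappa$ remains a valid state for all sufficiently small $|\lambda|$. Since $A_{kl}(\sigma\otimes\eta) = \eta_{kl}\sigma$, one gets $A_{kl}(\kappa) = \eta_{kl}\sigma + \lambda A_{kl}(\Delta)$, and the necessary CQ condition recalled in the Preliminaries forces $[A_{kl}(\kappa), A_{k'l'}(\kappa)] = 0$ for all indices. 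Expanding this commutator in powers of $\lambda$ and requiring it to vanish on an interval around $\lambda=0$, the coefficient of $\lambda$ must vanish, yielding the key identity
\begin{equation*}
\eta_{kl}[\sigma, A_{k'l'}(\Delta)] = \eta_{k'l'}[\sigma, A_{kl}(\Delta)] \qquad (\star)
\end{equation*}
for all index pairs $(k,l),(k',l')$, every full-rank state $\sigma$ and every state $\eta$.

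The argument is then completed by two substitutions into $(\star)$. First, set $\eta = \id/d$, so $\eta_{kl}=\delta_{kl}/d$; choosing $k=l$ and $k'\neq l'$ reduces $(\star)$ to $[\sigma, A_{k'l'}(\Delta)]=0$ for every full-rank $\sigma$, which extends to all states by continuity, so by Lemma~\ref{lem:l1} each off-diagonal block $A_{k'l'}(\Delta)$ is a scalar multiple of $\id$. Second, set $\eta = (1-\epsilon)\id/d + \epsilon|\phi\rangle\langle\phi|$ with $|\phi\rangle = d^{-1/2}\sum_k|k\rangle$ and $0<\epsilon<1$: this $\eta$ is a full-rank state with $\eta_{kk}=1/d$ and $\eta_{kl} = \epsilon/d \neq 0$ for $k\neq l$. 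Applying $(\star)$ with $(k,l)$ off-diagonal and $(k',l')=(k',k')$ diagonal, the right-hand side vanishes thanks to the first step, so $(\epsilon/d)[\sigma, A_{k'k'}(\Delta)] = 0$ and each diagonal block $A_{k'k'}(\Delta)$ is also scalar. Writing $A_{kl}(\Delta)=\alpha_{kl}\id$ for all $(k,l)$, the observation preceding Lemma~\ref{lem:l1} then gives $\Delta = \id\otimes\Xi$ with $\Xi = \sum_{kl}\alpha_{kl}|k\rangle\langle l|$, contradicting the hypothesis on $\Delta$.

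The main technical obstacle is the second substitution. With $\eta=\id/d$ alone, $(\star)$ only yields that the diagonal blocks $A_{kk}(\Delta)$ differ pairwise by scalars, not that each is individually scalar. One therefore needs to probe $(\star)$ with some $\eta$ having non-zero off-diagonal matrix elements while still being full-rank (so that $\sigma\otimes\eta$ is interior to the state space and $\kappa$ remains positive near $\lambda=0$). The rank-one perturbation of $\id/d$ along $|\phi\rangle = d^{-1/2}\sum_k|k\rangle$ accomplishes both: it switches on all off-diagonal entries of $\eta$ while keeping its diagonal equal to $1/d$, which makes the bootstrap from the off-diagonal to the diagonal blocks immediate.
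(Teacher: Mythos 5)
Your proof is correct, but it takes a genuinely different route from the paper's. The paper argues directly: given $\Delta$ with a non-scalar block $A_{pq}(\Delta)$, it uses Lemma~\ref{lem:l1} to pick a single state $\sigma$ with $[A_{pq}(\Delta),\sigma]\neq 0$, fixes $t\neq p$, and tests the one (non-product, in fact CC) state $\varrho_{\rm CQ}=\tfrac{1}{2}\sigma\otimes\vert t\rangle\langle t\vert+\tfrac{1}{2d^2}\id\otimes\id$; the choice $t\neq p$ simultaneously kills the zeroth-order term of $[A_{pq}(\kappa),A_{tt}(\kappa)]$ and makes the first-order term $\tfrac{\lambda}{2}[A_{pq}(\Delta),\sigma]$ nonzero. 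You instead prove the contrapositive by probing with the whole family of full-rank product states $\sigma\otimes\eta$, extract the universal first-order identity $(\star)$, and bootstrap in two stages (first $\eta=\id/d$ to force the off-diagonal blocks to be scalar, then an $\eta$ with nonvanishing off-diagonal entries to handle the diagonal blocks). Both arguments rest on the same ingredients --- the necessary commutativity of the $A_{kl}$ blocks for CQ states, Lemma~\ref{lem:l1}, and full-rank reference states to secure an interval of admissible $\lambda$ --- but your version yields a transparent characterization of the exceptional directions via $(\star)$, and you correctly identified and resolved the one real snag of the product-state family, namely that $\eta=\id/d$ alone only shows the diagonal blocks differ pairwise by scalars. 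The paper's version is shorter because one tailored mixed state works for every $\Delta$. Two harmless imprecisions in yours: $(\star)$ is only justified for full-rank $\eta$ (otherwise $\sigma\otimes\eta$ sits on the boundary and the set of valid $\lambda$ may be trivial), but both of your substitutions do use full-rank $\eta$; and the second substitution needs $d\geq 2$ so that an off-diagonal index pair exists, which is the only case of interest anyway.
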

\begin{proof}
Since $\Delta\neq \id \otimes \Xi$,  there exist $p$ and $q$ such that  $A_{pq}(\Delta)$ is not a multiple of the identity. By Lemma \ref{lem:l1} there is a state $\sigma$ such that the commutator $[A_{pq}(\Delta),\sigma]\neq 0$. Pick $t\neq p$, and define the state
\begin{equation}\label{eqn:cc_state}
\varrho_{\rm CQ} = \frac{1}{2}\sigma\otimes\vert t\rangle \langle t\vert + \frac{1}{2d^2}\id\otimes\id 
\end{equation}
which is clearly CQ (actually, it is CC). Furthermore,  since $\varrho_{\rm CQ}$ has full rank, there is an $\epsilon > 0$ such that $\kappa=\varrho_{\rm CQ}+\lam\Delta$ is a valid state for all $|\lam|<\epsilon$. Moreover,
\begin{equation*}
[A_{pq}(\varrho_{\rm CQ}+\lam\Delta) \,,\, A_{tt}(\varrho_{\rm CQ}+\lam\Delta)] = \frac{\lam}{2} [A_{pq}(\Delta) \,,\, \sigma] + \lam^2 [A_{pq}(\Delta) \,,\, A_{tt}(\Delta)]
\end{equation*}
because $A_{rs}(\varrho_{\rm CQ}) = \frac{1}{2} \delta_{rt}\delta_{st} \sigma + \frac{1}{2d^2}\delta_{rs}\id$ (recall that $t\neq p$). Choosing $\lam$ small enough, the right hand side of this equation can be made nonzero, and hence $\varrho_{\rm CQ} + \lam\Delta $ is not CQ. 
\end{proof}

\begin{corollary}
In order to determine if a state is CQ [resp., QC], the measurement does not have to be informationally complete, and the minimal number of measurement outcomes that is needed for this task is $d^4-d^2+1$.
\end{corollary}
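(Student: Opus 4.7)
The plan is to combine Propositions~\ref{prop:IX} and~\ref{prop:lcc2} with the dimension identity~\eqref{eqn:dim} and a short existence argument. Let $\mathcal{X}_0 = \{\id \otimes \Xi : \Xi = \Xi^*,\ \tr{\Xi} = 0\}$ denote the subspace of traceless Hermitian operators on $\complex^d \otimes \complex^d$ singled out by Proposition~\ref{prop:IX}; since the map $\Xi \mapsto \id \otimes \Xi$ is injective and the traceless Hermitian matrices on $\complex^d$ form a real vector space of dimension $d^2 - 1$, we have $\dim \mathcal{X}_0 = d^2 - 1$.

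To establish that informational completeness is not required, I would note that Proposition~\ref{prop:IX} says the set of CQ states is flat along every direction of $\mathcal{X}_0$: if $\varrho_{\rm CQ}$ is CQ and $\Delta \in \mathcal{X}_0$, then $\varrho_{\rm CQ} + \lambda \Delta$ is CQ whenever it is a valid state. Hence any POVM $E$ with $\mathcal{X}_E \supseteq \mathcal{X}_0$ still discriminates CQ from non-CQ, because the undetectable directions never exit the CQ set. Since $\mathcal{X}_0 \neq \{0\}$, such an $E$ cannot be informationally complete.

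Next I would derive the lower bound on outcomes by contradiction using Proposition~\ref{prop:lcc2}: if $\mathcal{X}_E$ contained some $\Delta \notin \mathcal{X}_0$, that proposition would produce a CQ state $\varrho_{\rm CQ}$ and a real $\lambda$ with $\varrho_{\rm CQ} + \lambda \Delta$ a valid non-CQ state indistinguishable from $\varrho_{\rm CQ}$ by $E$, contradicting verifiability. Hence $\mathcal{X}_E \subseteq \mathcal{X}_0$, so $\dim \mathcal{X}_E \leq d^2 - 1$, and combining with Eqn.~\eqref{eqn:dim} (using $D = d^2$) yields $\dim E \geq d^4 - d^2 + 1$. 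Since the number of outcomes of a POVM is at least the number of its linearly independent elements, the outcome count is bounded below by the same quantity.

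The main (though minor) remaining task is to exhibit a POVM saturating this bound. The orthogonal complement $\mathcal{X}_0^\perp$ inside the real space of Hermitian operators has dimension $d^4 - d^2 + 1$ and contains $\id$, because $\tr{\id\,(\id \otimes \Xi)} = d\,\tr{\Xi} = 0$ for any traceless $\Xi$. Picking a basis of $\mathcal{X}_0^\perp$ of the form $\{\id\} \cup \{F_j\}_{j=1}^{d^4 - d^2}$ with each $F_j$ traceless Hermitian, and setting $E_0 = \alpha \id$ together with $E_j = \beta(F_j + c\,\id)$, where $c > 0$ is large enough to make each $E_j$ positive and $\alpha, \beta > 0$ are chosen so that $\sum_j E_j = \id$, produces a POVM with $d^4 - d^2 + 1$ outcomes satisfying $\mathcal{X}_{E^*} = \mathcal{X}_0$. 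The QC case then follows by symmetry under exchange of the two parties.
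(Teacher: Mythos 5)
Your overall strategy is the same as the paper's: Propositions~\ref{prop:IX} and~\ref{prop:lcc2} identify $\mathcal{X}_0=\{\id\otimes\Xi\}$ as the exact set of ``safe'' directions, the dimension count $\dim\mathcal{X}_0=d^2-1$ together with Eqn.~\eqref{eqn:dim} gives the lower bound $d^4-d^2+1$, and a POVM saturating it is then exhibited (the paper simply cites Proposition~1 of Heinosaari--Mazzarella--Wolf for this last step, whereas you construct it by hand). Two slips need fixing. First, the inclusion in your second paragraph is reversed: a POVM $E$ verifies CQ-ness precisely when $\mathcal{X}_E\subseteq\mathcal{X}_0$, not $\mathcal{X}_E\supseteq\mathcal{X}_0$ --- the trivial POVM $E=\{\id\}$ has $\mathcal{X}_E\supseteq\mathcal{X}_0$ and distinguishes nothing; what you need for non-informational-completeness is a POVM with $\{0\}\neq\mathcal{X}_E\subseteq\mathcal{X}_0$, whose existence rests entirely on your final construction. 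Second, that construction's normalization does not close as written: with $E_0=\alpha\id$ and $E_j=\beta(F_j+c\,\id)$ one gets $\sum_j E_j=\bigl(\alpha+\beta cN\bigr)\id+\beta\sum_{j}F_j$, and $\sum_j F_j$ is a nonzero traceless operator (a linearly independent family cannot sum to zero), so this cannot equal $\id$. The standard repair is to set $E_0=\id-\sum_{j\ge 1}E_j$, which is positive for small $\beta$ and still lies in $\mathcal{X}_0^{\perp}$ (being a combination of $\id$ and $\sum_j F_j$), so that ${\rm span}\{E_0,\dots,E_N\}=\mathcal{X}_0^{\perp}$ and hence $\mathcal{X}_E=\mathcal{X}_0$. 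With these two corrections the argument is complete and matches the paper's.
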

\begin{proof}
By Propositions \ref{prop:IX} and \ref{prop:lcc2}, the linear space $\{\id\otimes\Xi \mid \text{$\Xi$ is traceless and Hermitian}\}$ coincides with the set $\mathcal{X}$ of traceless Hermitian operators $\Delta$ such that the state $\kappa = \varrho_{\rm CQ} + \lambda\Delta$ is CQ for all CQ states $\varrho_{\rm CQ}$ and valid $\lambda$'s. In particular, the set $\mathcal{X}$ is maximal among all the subspaces of traceless Hermitian operators sharing this property. By \cite[Proposition 1]{HeMaWo13S}, one can construct a POVM $E$ with $d^4-\dim\mathcal{X} = d^4-d^2+1$ outcomes such that $\mathcal{X}_E =\{ \Delta \mid \tr{\Delta E_j}=0\,\text{ for all }\, j \} \equiv \mathcal{X}$, and which is thus capable of determining if a state is CQ. Moreover, by maximality of $\mathcal{X}$, $d^4-d^2+1$ is actually the minimal number of outcomes required for this task.
\end{proof}

\section*{CC states}

We have already established all of the necessary ingredients needed for proving that informational completeness is required for determining if a state is CC or not. In fact, the key point is the observation that the state \eqref{eqn:cc_state} constructed in the proof of Proposition \ref{prop:lcc2} is actually CC. Let us now denote this state by $\varrho_{\rm CC}$. The proposition then directly implies that whenever $\Delta\neq \id \otimes \Xi$, there exists a $\lambda$ such that $\kappa = \varrho_{\rm CC} + \lambda \Delta$ is a state but not CQ. Since by definition CC states are both CQ and QC, we conclude that $\kappa$ is not CC. Due to the symmetric situation for QC states, we also have that whenever $\Delta\neq \Xi'\otimes \id$, there exists a $\lambda'$ such that $\kappa'= \varrho_{\rm CC} + \lambda'\Delta$ is a state but not CC. Now to close the reasoning we only need to note that the only possibility for having $\Delta = \id\otimes \Xi = \Xi' \otimes \id$ with $\tr{\Xi} = \tr{\Xi'} = 0$ is that $\Delta = 0$, which is not of concern here. We summarize this in the following proposition.
\begin{proposition}
For any nonzero traceless Hermitian $\Delta$, there exists a CC state $\varrho_{\rm CC}$ and a real number $\lambda$ such that $\kappa = \varrho_{\rm CC}   + \lambda \Delta$ is a valid state but not CC. 
\end{proposition}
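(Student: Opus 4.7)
The plan is to reduce this to Proposition \ref{prop:lcc2} (which handles the CQ case) together with its obvious QC analogue, using the fact that $\mathrm{CC} = \mathrm{CQ} \cap \mathrm{QC}$. So to produce a $\kappa$ that fails to be CC, it is enough to produce a $\kappa$ that fails to be CQ \emph{or} fails to be QC. The job therefore splits into two cases depending on the structure of $\Delta$.

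First, I would observe that the state $\varrho_{\rm CC}$ constructed in \eqref{eqn:cc_state} during the proof of Proposition \ref{prop:lcc2} is in fact CC (not merely CQ), since it is a convex combination of $\sigma\otimes\kb{t}{t}$ and $\id/d^2$, each of which admits a product-of-eigenbases decomposition. By the mirror-image argument in which the two tensor factors are swapped, Proposition \ref{prop:lcc2} also yields the following QC version: if $\Delta \neq \Xi'\otimes\id$ for any traceless Hermitian $\Xi'$, then there exists a CC state $\varrho_{\rm CC}$ and a real $\lambda$ such that $\kappa = \varrho_{\rm CC}+\lambda\Delta$ is a valid state but not QC.

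Now I would split into cases. If $\Delta \neq \id\otimes\Xi$ for every traceless Hermitian $\Xi$, apply Proposition \ref{prop:lcc2} directly: the resulting $\kappa$ is not CQ, hence not CC. If instead $\Delta = \id\otimes\Xi$ for some traceless Hermitian $\Xi$, then since $\Delta\neq 0$ we have $\Xi\neq 0$, and I claim $\Delta$ cannot simultaneously be written as $\Xi'\otimes\id$ with $\tr\Xi'=0$. Indeed, equating $\id\otimes\Xi = \Xi'\otimes\id$ and taking the partial trace over the first factor gives $d\,\Xi = (\tr\Xi')\id = 0$, contradicting $\Xi\neq 0$. Hence the QC analogue applies and produces a $\kappa$ that is not QC, again forcing $\kappa\notin\mathrm{CC}$.

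The two cases together cover every nonzero traceless Hermitian $\Delta$, which completes the argument. I do not anticipate any real obstacle: the CQ direction is already done in Proposition \ref{prop:lcc2}, the QC direction is strictly formal by exchanging the tensor factors, and the only nontrivial point is the short partial-trace argument that no nonzero traceless $\Delta$ can sit in both $\id\otimes\mathcal{B}(\C^d)$ and $\mathcal{B}(\C^d)\otimes\id$ simultaneously. The one thing to double-check is that in the QC analogue the very same CC state $\varrho_{\rm CC}$ (up to swapping the role of the two factors) suffices, which is immediate because CC states are invariant under the tensor-factor swap up to relabeling.
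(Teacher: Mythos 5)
Your proposal is correct and follows essentially the same route as the paper: observe that the state in \eqref{eqn:cc_state} is in fact CC, invoke Proposition \ref{prop:lcc2} and its factor-swapped analogue to kill CQ-ness or QC-ness respectively, and note that no nonzero traceless $\Delta$ can be simultaneously of the form $\id\otimes\Xi$ and $\Xi'\otimes\id$. The only difference is cosmetic: you spell out the partial-trace argument for that last incompatibility, which the paper merely asserts.
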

\begin{corollary}
In order to determine if a state is CC or not, the measurement must be informationally complete.
\end{corollary}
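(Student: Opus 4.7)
The plan is to reduce the CC case to Proposition \ref{prop:lcc2} and its quantum-classical analogue, exploiting the fact that the CQ state
\begin{equation*}
\varrho_{\rm CQ} = \frac{1}{2}\sigma\otimes |t\rangle\langle t| + \frac{1}{2d^2}\id\otimes\id
\end{equation*}
constructed in the proof of Proposition \ref{prop:lcc2} is in fact already CC. Indeed, diagonalising $\sigma = \sum_i p_i |e_i\rangle\langle e_i|$ and extending $|t\rangle$ to an orthonormal basis $\{|f_j\rangle\}_{j=1}^d$ of the second factor, the first summand is diagonal in the product basis $\{|e_i\rangle\otimes|f_j\rangle\}$, and so is $\id\otimes\id$; both summands then exhibit the common CC form. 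Accordingly I rename this state $\varrho_{\rm CC}$, and note that it is simultaneously CQ and QC.

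For any nonzero traceless Hermitian $\Delta$, I claim that at least one of the following must hold: (a) $\Delta$ is not of the form $\id\otimes\Xi$ for any traceless Hermitian $\Xi$, or (b) $\Delta$ is not of the form $\Xi'\otimes\id$ for any traceless Hermitian $\Xi'$. For if both were to fail, then $\id\otimes\Xi = \Xi'\otimes\id$; comparing matrix elements in a product basis forces both $\Xi$ and $\Xi'$ to be scalar multiples of the identity, and the tracelessness condition then yields $\Xi = \Xi' = 0$, contradicting $\Delta\neq 0$. This elementary dichotomy is the only genuinely new ingredient of the proof.

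In case (a), Proposition \ref{prop:lcc2} directly supplies a real $\lambda$ such that $\kappa = \varrho_{\rm CC} + \lambda\Delta$ is a valid state but not CQ, and hence \emph{a fortiori} not CC. In case (b), the QC analogue of Proposition \ref{prop:lcc2} --- obtained by interchanging the two tensor factors throughout the argument of that proposition, and applied to $\varrho_{\rm CC}$ (which is QC by virtue of being CC) --- supplies a real $\lambda'$ such that $\varrho_{\rm CC} + \lambda'\Delta$ is a valid state that is not QC, and hence again not CC. Either way, the desired counterexample is produced. I expect no serious obstacle: the whole argument is essentially bookkeeping around the CQ and QC statements already in hand, together with the algebraic dichotomy above. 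The corollary then follows at once, since any POVM $E$ capable of deciding CC membership must have $\mathcal{X}_E = \{0\}$, which is the definition of informational completeness.
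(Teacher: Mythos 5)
Your proposal is correct and follows essentially the same route as the paper: it rests on the observation that the state constructed in the proof of Proposition~\ref{prop:lcc2} is in fact CC, invokes that proposition and its QC mirror image, and closes with the dichotomy that no nonzero traceless $\Delta$ can simultaneously be of the form $\id\otimes\Xi$ and $\Xi'\otimes\id$. The only cosmetic caveat is that in case (b) the QC analogue of Proposition~\ref{prop:lcc2} produces its own (mirror-image, still CC) state rather than being ``applied to'' the fixed $\varrho_{\rm CC}$, but this does not affect the conclusion.
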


\section*{CQ or QC states}

As the final case we consider the task of determining if a state is either a CQ or a QC state. This corresponds to the subset $\mathcal{P}$ being the union of the sets of CQ and QC states. Just as for CC states, we can exploit the fact that any nonzero traceless Hermitian operator satisfies either $\Delta \neq \id \otimes \Xi$ or $\Delta\neq \Xi' \otimes \id$. Using the symmetry between the CQ and QC states it is therefore sufficient to show that in the first case we can always find a CQ state $\varrho_{\rm CQ}$ such that $\kappa =\varrho_{\rm CQ} + \lambda \Delta$ is neither CQ nor QC. We first need a minor refinement of Lemma~\ref{lem:l1}.
\begin{lemma}\label{lem:l2}
If $A$ is  not a multiple of the identity, then there exists a state $\sigma$ such that $[A,\sigma]\neq 0$ and $\langle 1 \vert \sigma \vert 2\rangle  \neq 0$.
\end{lemma}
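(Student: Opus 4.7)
The plan is to build the refined state $\sigma$ by convex-combining the state produced by Lemma~\ref{lem:l1} with a fixed auxiliary state having a nonzero off-diagonal $(1,2)$ matrix element, and then to argue that for generic values of the mixing parameter both desired properties survive simultaneously.

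Concretely, I would first invoke Lemma~\ref{lem:l1} to obtain a state $\sigma_0$ with $[A,\sigma_0]\neq 0$. Next, I would fix a convenient state $\tau$ with $\langle 1|\tau|2\rangle\neq 0$: the rank-one projection $\tau=|+\rangle\langle +|$ onto $|+\rangle=(|1\rangle+|2\rangle)/\sqrt{2}$ works, giving $\langle 1|\tau|2\rangle = 1/2$. Then for $\epsilon\in[0,1]$ I would consider the convex combination $\sigma_\epsilon=(1-\epsilon)\sigma_0+\epsilon\tau$, which is automatically positive and of unit trace.

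The key step is to check that each of the two ``bad'' conditions $[A,\sigma_\epsilon]=0$ and $\langle 1|\sigma_\epsilon|2\rangle=0$ can hold for at most one value of $\epsilon\in[0,1]$. For the second: the map $\epsilon\mapsto\langle 1|\sigma_\epsilon|2\rangle$ is real-affine and equals $1/2$ at $\epsilon=1$, hence vanishes at most once. For the first: $\epsilon\mapsto[A,\sigma_\epsilon]$ is a real-affine map into the Hermitian operators with $[A,\sigma_0]\neq 0$, and it can vanish for some $\epsilon$ only if $[A,\tau]$ is a real multiple of $[A,\sigma_0]$, in which case that relation pins $\epsilon$ down uniquely. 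Removing at most two exceptional values from $(0,1]$ leaves some $\epsilon$ for which $\sigma:=\sigma_\epsilon$ satisfies both requirements.

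I do not expect a serious obstacle. The underlying picture is that the commutant slice $\{\sigma:[A,\sigma]=0\}$ and the affine slice $\{\sigma:\langle 1|\sigma|2\rangle=0\}$ are both lower-dimensional subsets of the state space once $A$ fails to be a scalar, so a generic one-parameter family inside the state space avoids both. The auxiliary state $\tau$ is chosen precisely so that the segment $\{\sigma_\epsilon\}$ is transverse to the second slice; Lemma~\ref{lem:l1} already ensures transversality to the first.
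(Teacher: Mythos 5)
Your proof is correct and follows essentially the same route as the paper: take the state $\sigma_0$ from Lemma~\ref{lem:l1} and convex-combine it with an auxiliary state having a nonzero $(1,2)$ entry, so that both the non-commutation and the off-diagonal condition survive for a suitable mixing parameter. The paper uses the auxiliary state $(\vert 1\rangle\langle 2\vert+\vert 2\rangle\langle 1\vert+\id)/d$ with a small mixing weight instead of your genericity count over $\epsilon$, but the two arguments are interchangeable.
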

\begin{proof}
By Lemma~\ref{lem:l1} there exists a state $\sigma_0$ such that $[A,\sigma_0]\neq 0$. If $\langle 1 \vert  \sigma_0 \vert 2\rangle \neq 0$, then the second claim follows with $\sigma = \sigma_0$. If otherwise $\langle 1 \vert \sigma_0 \vert 2\rangle =0$, then for small nonzero positive $\mu $ the operator $\sigma = (1-\mu)\sigma_0 + \mu(\vert 1\rangle \langle 2\vert + \vert 2 \rangle \langle 1 \vert + \id)/d$ is still a state which does not commute with $A$, and $\langle 1 \vert \sigma \vert 2 \rangle  = \mu/d \neq 0$.
\end{proof}

\begin{proposition}
For any nonzero traceless Hermitian $\Delta$, there exists a CQ or QC state $\varrho$ and a real number $\lambda$ such that $\kappa = \varrho   + \lambda \Delta$ is a valid state but neither CQ nor QC.
\end{proposition}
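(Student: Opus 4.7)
By the CQ$\leftrightarrow$QC symmetry noted in the excerpt, it suffices to handle $\Delta\neq \id\otimes\Xi$, i.e.\ the case where some $A_{pq}(\Delta)$ is not a scalar multiple of $\id$. Lemma~\ref{lem:l2} then supplies a first-factor state $\sigma$ with $[A_{pq}(\Delta),\sigma]\neq 0$, and I let $\{|\chi_i\rangle\}$ be its eigenbasis (after a harmless small perturbation of $\sigma$ we may assume the spectrum is nondegenerate, preserving the relevant properties by continuity).

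Reusing the CC state of Proposition~\ref{prop:lcc2} is not enough here: in the edge case $\Delta=\Xi'\otimes\id$ every $B_{ij}(\Delta)$ is proportional to $\id$, and one checks directly that the resulting $\kappa$ admits the QC decomposition $(\tfrac12\sigma+M)\otimes|t\rangle\langle t|+\sum_{s\neq t}M\otimes|s\rangle\langle s|$ with $M=\tfrac{1}{2d^2}\id+\lambda\Xi'$, which is positive for small $|\lambda|$. To cover all $\Delta\neq\id\otimes\Xi$ uniformly, I take $\varrho$ to be CQ but \emph{not} QC, so that the non-QC of $\kappa$ is inherited from $\varrho$ by continuity. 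Concretely, set
\[
\varrho \;=\; \sum_{i=1}^{d} c_i\,|\chi_i\rangle\langle\chi_i|\otimes\eta_i \;+\; \frac{\epsilon}{d^2}\,\id\otimes\id,
\]
with positive weights $c_i$, a small $\epsilon>0$ ensuring full rank, and second-factor states $\eta_i$ satisfying $[\eta_1,\eta_2]\neq 0$. Then $\varrho$ is CQ by construction, and the identity $B_{ii}(\varrho)=c_i\eta_i+\tfrac{\epsilon}{d^2}\id$ gives $[B_{11}(\varrho),B_{22}(\varrho)]=c_1c_2\,[\eta_1,\eta_2]\neq 0$, hence $\varrho\notin\mathrm{QC}$. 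Closedness of the QC set yields $\kappa=\varrho+\lambda\Delta\notin\mathrm{QC}$ for every sufficiently small $|\lambda|$.

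For the remaining property $\kappa\notin\mathrm{CQ}$, I expand $[A_{pq}(\kappa),A_{tt}(\kappa)]$ to first order in $\lambda$ for some $t\neq p$. Since each $A_{kl}(\varrho)$ is diagonal in $\{|\chi_i\rangle\}$ (with diagonal entries $c_i(\eta_i)_{kl}+\tfrac{\epsilon}{d^2}\delta_{kl}$), the zeroth-order contribution vanishes and the coefficient of $\lambda$ becomes $[D_{pq},A_{tt}(\Delta)]-[D_{tt},A_{pq}(\Delta)]$, where $D_{kl}$ denotes the diagonal operator above. Because $[A_{pq}(\Delta),\sigma]\neq 0$, the operator $A_{pq}(\Delta)$ is not diagonal in $\{|\chi_i\rangle\}$ and has some nonzero off-diagonal entry at a position $(r,s)$; a short calculation shows that choosing the $\eta_i$'s so that $c_r((\eta_r)_{pp}-(\eta_r)_{tt})\neq c_s((\eta_s)_{pp}-(\eta_s)_{tt})$ (and with a similar small tweak if $A_{tt}(\Delta)$ also happens to have off-diagonal at $(r,s)$) makes the $(r,s)$ entry of the first-order coefficient nonzero, so $\kappa\notin\mathrm{CQ}$ for small $\lambda\neq 0$. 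The main obstacle is coordinating the two open conditions on the $\eta_i$'s---$[\eta_1,\eta_2]\neq 0$ for the non-QC step and the diagonal-difference condition for the non-CQ step---into a single explicit choice; this is straightforward but needs a minor case analysis depending on whether $\{r,s\}$ overlaps with $\{1,2\}$.
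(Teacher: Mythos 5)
Your strategy is correct and is, at its core, the same as the paper's: reduce by symmetry to $\Delta\neq\id\otimes\Xi$, pick a full\--rank CQ state that already fails to be QC (so that non\--QC survives small perturbations, by closedness of the QC set or, equivalently, by persistence of a nonzero $B$\--commutator), and then destroy the CQ property through the first\--order term of $[A_{pq}(\kappa),A_{tt}(\kappa)]$. Your observation that the state of Proposition~\ref{prop:lcc2} cannot simply be reused (because for $\Delta=\Xi'\otimes\id$ the perturbed state stays QC) is exactly the issue the paper's proof is designed around. Where you diverge is the concrete state. The paper takes $\varrho_{\rm CQ}=\tfrac12\sigma\otimes\vert t\rangle\langle t\vert+\tfrac{1}{2d}\id\otimes\gamma$ with $[\vert t\rangle\langle t\vert,\gamma]\neq0$ and $\langle1\vert\sigma\vert2\rangle\neq0$ (this is the whole purpose of Lemma~\ref{lem:l2}); the payoff of that choice is that $A_{pq}(\varrho_{\rm CQ})=\tfrac{1}{2d}\langle p\vert\gamma\vert q\rangle\,\id$ is a \emph{scalar} (since $t\neq p$), so the cross term $[A_{pq}(\varrho_{\rm CQ}),A_{tt}(\Delta)]$ vanishes identically and the first\--order coefficient is exactly $\tfrac12[A_{pq}(\Delta),\sigma]\neq0$, with no genericity argument at all. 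Your state $\sum_ic_i\vert\chi_i\rangle\langle\chi_i\vert\otimes\eta_i+\tfrac{\epsilon}{d^2}\id\otimes\id$ makes $A_{pq}(\varrho)$ merely diagonal, so the two first\--order terms can in principle cancel and you must impose a non\--cancellation condition on the $\eta_i$'s. This is fillable --- the condition and $[\eta_1,\eta_2]\neq0$ are each complements of proper closed subsets of the tuple of states, hence jointly satisfiable --- but two details need tightening: (i) the condition you state carries an index slip; at the off\--diagonal position $(r,s)$ with $(A_{pq}(\Delta))_{rs}\neq0$ the requirement is $\bigl(c_r(\eta_r)_{pq}-c_s(\eta_s)_{pq}\bigr)(A_{tt}(\Delta))_{rs}\neq\bigl(c_r(\eta_r)_{tt}-c_s(\eta_s)_{tt}\bigr)(A_{pq}(\Delta))_{rs}$, which reduces to $c_r(\eta_r)_{tt}\neq c_s(\eta_s)_{tt}$ when $(A_{tt}(\Delta))_{rs}=0$, not the $pp$\--versus\--$tt$ difference you wrote; (ii) you should say explicitly that the $B$\--commutator criterion may be applied in the basis $\{\vert\chi_i\rangle\}$ rather than the computational basis (true, since the QC property and the criterion are covariant under a change of basis on the first factor, but worth a sentence). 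In short: same architecture, but the paper's choice of $\id\otimes\gamma$ in place of your $\id\otimes\id$ buys an exact, case\--free first\--order coefficient, which is precisely the step you leave as a sketch.
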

\begin{proof}
As discussed above, we can restrict to the case $\Delta \neq \id \otimes \Xi$. Let $p$ and $q$ be such that $A_{pq}(\Delta)$ is not a multiple of the identity, and let $\sigma$ be a state such that $[A_{pq}(\Delta),\sigma]\neq 0$ and $\langle 1\vert \sigma \vert 2\rangle \neq 0$. Moreover, choose $t\neq p$, and let $\gamma$ be any state such that $[\vert t\rangle\langle t\vert,\gamma]\neq 0$. Such states $\sigma$ and $\gamma$ exist by Lemmas \ref{lem:l1} and \ref{lem:l2}. By possibly replacing $\gamma$ with the convex mixture $\gamma/2 + \id/(2d)$, we can assume that $\gamma$ has full rank. Then, the composite state
$$
\varrho_{\rm CQ} = \frac{1}{2}\sigma\otimes\vert t \rangle \langle t\vert  + \frac{1}{2d}\id\otimes\gamma
$$
has full rank and is CQ. We have
\begin{align*}
A_{rs}(\varrho_{\rm CQ}) & = \frac{1}{2} \delta_{rt}\delta_{st} \sigma + \frac{1}{2d} \langle r \vert \gamma \vert s\rangle \id \\
B_{rs}(\varrho_{\rm CQ}) & = \frac{1}{2} \langle r \vert \sigma \vert s\rangle \vert t\rangle\langle t \vert  + \frac{1}{2d} \delta_{rs} \gamma .
\end{align*}
Therefore, for small nonzero real numbers $\lam $ the operator $\varrho_{\rm CQ}+\lam\Delta$ is a state, and
\begin{align*}
 [A_{pq}(\varrho_{\rm CQ}+\lam\Delta)\,,\,A_{tt}(\varrho_{\rm CQ}+\lam\Delta)] =&\, \frac{\lam}{2} [A_{pq}(\Delta)\,,\,\sigma] + \lam^2 [A_{pq}(\Delta)\,,\,A_{tt}(\Delta)] \\
[B_{12}(\varrho_{\rm CQ}+\lam\Delta)\,,\,B_{11}(\varrho_{\rm CQ}+\lam\Delta)] =&\, \frac{1}{4d} \langle 1 \vert \sigma \vert 2\rangle  \, [\vert t\rangle \langle t\vert\,,\,\gamma] + \lam ([B_{12}(\Delta)\,,\,B_{11}(\varrho_{\rm CQ})] + [B_{12}(\varrho_{\rm CQ})\,,\,B_{11}(\Delta)]) \\
&+ \lam^2 [B_{12}(\Delta)\,,\,B_{11}(\Delta)] .
\end{align*}
Choosing $\lam$ small enough, both of the commutators can be made nonzero, and  hence $\varrho_{\rm CQ}+\lam\Delta$ is neither CQ nor QC.
\end{proof}

\begin{corollary}
In order to determine if a state is CQ or QC, the measurement must be informationally complete. 
\end{corollary}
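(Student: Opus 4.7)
The plan is to reduce the statement to the case $\Delta \neq \id \otimes \Xi$ by symmetry, and then construct a single CQ state $\varrho_{\rm CQ}$ whose perturbation $\kappa = \varrho_{\rm CQ} + \lambda\Delta$ simultaneously fails both the CQ commutativity test (via the $A_{kl}$ operators) and the QC commutativity test (via the $B_{ij}$ operators). First I would note that if $\Delta = \id \otimes \Xi = \Xi' \otimes \id$ with $\tr \Xi = \tr \Xi' = 0$, then tracing out either factor forces $\Xi = \Xi' = 0$ and hence $\Delta = 0$, contradicting the hypothesis. So at most one of these equalities holds; swapping the two tensor factors exchanges CQ and QC, so I can assume $\Delta \neq \id \otimes \Xi$ without loss of generality.

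The construction mimics the one used in Proposition~\ref{prop:lcc2}, but the seed state used there — a convex mixture of $\sigma \otimes \vert t\rangle\langle t\vert$ with the maximally mixed state $\id/d^2$ — is already CC and so its $B$-commutators vanish identically; perturbing it by $\lambda\Delta$ may keep it QC to low orders in $\lambda$. To break this degeneracy, I propose to replace the isotropic tail $\frac{1}{2d^2}\id\otimes\id$ by $\frac{1}{2d}\id\otimes\gamma$ for a carefully chosen full-rank state $\gamma$ on the second subsystem, and take
\begin{equation*}
\varrho_{\rm CQ} = \tfrac{1}{2}\,\sigma\otimes\vert t\rangle\langle t\vert + \tfrac{1}{2d}\,\id\otimes\gamma .
\end{equation*}
This is still manifestly CQ and full rank. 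The direct calculation (already visible in the statement to be proved) gives $A_{rs}(\varrho_{\rm CQ}) = \tfrac{1}{2}\delta_{rt}\delta_{st}\sigma + \tfrac{1}{2d}\langle r\vert\gamma\vert s\rangle \id$ and $B_{rs}(\varrho_{\rm CQ}) = \tfrac{1}{2}\langle r\vert\sigma\vert s\rangle\vert t\rangle\langle t\vert + \tfrac{1}{2d}\delta_{rs}\gamma$, so the two relevant commutators of $\kappa$ expand as a nonzero order-$\lambda$ term $\tfrac{\lambda}{2}[A_{pq}(\Delta),\sigma]$ for the $A$-side (plus $O(\lambda^2)$), and a nonzero order-$0$ term $\tfrac{1}{4d}\langle 1\vert\sigma\vert 2\rangle [\vert t\rangle\langle t\vert,\gamma]$ for the $B$-side (plus $O(\lambda)$).

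To realise this I need (i) $\sigma$ with both $[A_{pq}(\Delta),\sigma]\neq 0$ and $\langle 1\vert\sigma\vert 2\rangle \neq 0$, and (ii) a full-rank $\gamma$ with $[\vert t\rangle\langle t\vert,\gamma]\neq 0$. Here $p,q$ are chosen so that $A_{pq}(\Delta)$ is not scalar (possible because $\Delta\neq \id\otimes \Xi$, as noted before the statement of Lemma~\ref{lem:l1}), and $t\neq p$. For (ii), $\vert t\rangle\langle t\vert$ is not a scalar since $d\geq 2$, so Lemma~\ref{lem:l1} yields some $\gamma_0$ and I then replace it by $\gamma_0/2 + \id/(2d)$ to gain full rank without killing the commutator. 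For (i), I would state and prove a short refinement of Lemma~\ref{lem:l1}: start from a $\sigma_0$ given by Lemma~\ref{lem:l1}; if $\langle 1\vert\sigma_0\vert 2\rangle = 0$, blend $\sigma_0$ with a small positive weight $\mu$ of $(\id + \vert 1\rangle\langle 2\vert + \vert 2\rangle\langle 1\vert)/d$, observing that the commutator $[A_{pq}(\Delta),\cdot]$ depends continuously on its argument and so remains nonzero for $\mu$ small, while $\langle 1\vert\sigma\vert 2\rangle = \mu/d \neq 0$.

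With $p,q,t,\sigma,\gamma$ in hand, taking $\lambda$ small enough ensures (by full rank of $\varrho_{\rm CQ}$) that $\kappa$ is a state, and that the leading terms computed above dominate their respective commutators, making both nonzero. The characterisations of CQ and QC recalled in the Preliminaries then force $\kappa$ to be neither CQ nor QC, which is what the proposition asserts. The main obstacle is the compatibility of the two obstructions, which sit at different orders in $\lambda$: the CQ violation must be of order $\lambda$ (since $\varrho_{\rm CQ}$ itself is CQ, so the order-$0$ commutator vanishes), while the QC violation must be of order $0$ (otherwise it could be overwhelmed by higher-order terms for small $\lambda$). The key design decision is to widen the seed state from $\sigma\otimes\vert t\rangle\langle t\vert + \mathrm{cst}\cdot\id\otimes\id$ to $\sigma\otimes\vert t\rangle\langle t\vert + \mathrm{cst}\cdot\id\otimes\gamma$, which injects precisely the missing zeroth-order $B$-commutator without disturbing the first-order $A$-commutator; the refined Lemma~\ref{lem:l2} is the bookkeeping that makes this injection genuinely nonzero.
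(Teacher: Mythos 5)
Your proposal is correct and follows essentially the same route as the paper: the same reduction by symmetry to $\Delta \neq \id\otimes\Xi$, the same seed state $\tfrac{1}{2}\sigma\otimes\vert t\rangle\langle t\vert + \tfrac{1}{2d}\id\otimes\gamma$, the same refinement of Lemma~\ref{lem:l1} guaranteeing $\langle 1\vert\sigma\vert 2\rangle\neq 0$, and the same order-by-order analysis of the $A$- and $B$-commutators. No gaps.
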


\bibliographystyle{unsrt}

\end{document}